\newtheorem{Theorem}{Theorem}
\newtheorem{Lemma}[Theorem]{Lemma}
\theoremstyle{remark}
\newcommand{\be}{\begin{equation}}
\newcommand{\ee}{\end{equation}}
\newcommand{\ea}[1]{\begin{align}#1\end{align}}
\newcommand{\nn}{\nonumber \\}
\newcommand{\bc}{\begin{center}}
\newcommand{\ec}{\end{center}}
\newcommand{\bmt}{\begin{pmatrix}}
\newcommand{\emt}{\end{pmatrix}}
\newcommand{\vpu}[1]{^{\vphantom{#1}}}
\newcommand{\la}{\langle}
\newcommand{\ra}{\rangle}
\newcommand{\fd}[1]{\mathbb{#1}}
\newcommand{\mbf}[1]{\mathbf{#1}}
\newcommand{\mcl}[1]{\mathcal{#1}}
\newcommand{\sy}{\hat{\rho}}
\newcommand{\ap}{\hat{\alpha}}
\newcommand{\syc}{\lambda}
\newcommand{\spj}{\mcl{P}_{AB}}
\newcommand{\syj}{\hat{\rho}_{AB}}
\newcommand{\eop}[1]{\hat{\epsilon}_{\rm{#1}}}
\newcommand{\dop}[1]{\hat{\delta}_{\rm{#1}}}
\newcommand{\eva}[1]{\Delta^{\sy}_{\rm{e}} #1}
\newcommand{\dva}[1]{\Delta^{\sy}_{\rm{d}} #1}
\newcommand{\uca}[1]{\Delta^{\sy} #1}
\newcommand{\evb}[1]{\Delta_{\rm{e}} #1}
\newcommand{\dvb}[1]{\Delta_{\rm{d}} #1}
\newcommand{\evc}[1]{\Delta^{\mcl{R}}_{\rm{e}} #1}
\newcommand{\dvc}[1]{\Delta^{\mcl{R}}_{\rm{d}} #1}
\newcommand{\ebva}[1]{\Delta^{\sy}_{\rm{D} \rm{e}} #1}
\newcommand{\dbva}[1]{\Delta^{\sy}_{\rm{D}\rm{d}} #1}
\newcommand{\ebvb}[1]{\Delta_{\rm{D}\rm{e}} #1}
\newcommand{\dbvb}[1]{\Delta_{\rm{D}\rm{d}} #1}
\newcommand{\ebvc}[1]{\Delta^{\mcl{R}}_{\rm{D}\rm{e}} #1}
\newcommand{\dbvc}[1]{\Delta^{\mcl{R}}_{\rm{D}\rm{d}} #1}
\newcommand{\etva}[1]{\Delta^{\sy}_{\rm{C}\rm{e}} #1}
\newcommand{\dtva}[1]{\Delta^{\sy}_{\rm{C}\rm{d}} #1}
\newcommand{\etvb}[1]{\Delta_{\rm{C}\rm{e}} #1}
\newcommand{\dtvb}[1]{\Delta_{\rm{C}\rm{d}} #1}
\newcommand{\etvc}[1]{\Delta^{\mcl{R}}_{\rm{C}\rm{e}} #1}
\newcommand{\dtvc}[1]{\Delta^{\mcl{R}}_{\rm{C}\rm{d}} #1}
\newcommand{\DM}{D }
\newcommand{\OP}{O }
\newcommand{\CO}{C }
\newcommand{\lX}{l_{\rm{X}}}
\newcommand{\lP}{l_{\rm{P}}}
\DeclareMathOperator{\Tr}{Tr}
\DeclareMathOperator{\SU}{SU}
\begin{document}
\title[Quantum Errors and Disturbances]{Quantum Errors and Disturbances:  Response to Busch, Lahti and Werner.}
\maketitle

\begin{center} D.M. Appleby
\\
\emph{Centre for Engineered Quantum Systems, School of Physics, The University of Sydney, Sydney, NSW, Australia}
 \end{center}

\vspace{0.5 cm}
 
\begin{center} \textbf{Abstract}

\vspace{0.5 cm}

\vspace{0.35 cm}
\parbox{12 cm }{Busch, Lahti and Werner (BLW) have recently criticized the operator approach to the description of quantum errors and disturbances.   Their criticisms are justified to the extent that the physical meaning of the operator definitions has not hitherto been adequately explained.   We rectify that omission.  We then examine BLW's criticisms in the light of our   analysis.  We argue that, although the approach BLW favour (based on the Wasserstein 2-deviation) has its uses, there are important physical situations where an operator approach is preferable.  We also discuss the reason why the error-disturbance relation is still giving rise to controversies almost a century after Heisenberg first stated his microscope argument.  We argue that the source of the difficulties is the problem of interpretation, which is not so wholly disconnected from experimental practicalities as is sometimes supposed.}
\end{center}

\vspace{0.2 in}

\allowdisplaybreaks
\section{Introduction}
The error-disturbance principle remains highly controversial almost a century after Heisenberg wrote the paper~\cite{Heisenberg:1927} which originally suggested it.  It is  remarkable that this should be so, since the disagreements concern what is arguably the most fundamental concept of all, not only in physics, but in empirical science generally:  namely, the concept of measurement accuracy.  Measuring instruments are not born equal.  If one did not have a way to distinguish measurements which are in some sense ``good'' from measurements which are in some sense ``bad''---if one did not have what Busch \emph{et al}~\cite{Busch:2014a} call a ``figure of merit''---one would be forced to regard all measurements as being on the same footing.  There would, in fact, be no reason to prefer numbers obtained using a state-of-the-art photon counter from those obtained using the cheaper, less  demanding procedure of making a blind guess.  Under such conditions empirical science would be impossible.  Since physics has actually made huge advances over the last century  it is obvious that, on a practical level, experimentalists  have ways to distinguish good measurements from bad.  However, those   practical methods are not supported by an adequate understanding at the theoretical level.  

It is worth asking why, given the fundamental importance of the problem, progress has been so slow.   Although it is true that the problem is technically demanding, it appears to us that the main obstacle has always been, as it continues to be, conceptual.  The classical concept of error involves a comparison between the measured value and the true value, as it existed before the measurement was made.  The Bell-Kochen-Specker theorem~\cite{Bell:1966,Kochen:1967,Clifton:2000,Appleby:2005c}, however, requires us to abandon the idea that a measurement ascertains the pre-existing value of a specified observable.  This is such a radical departure from classical ideas that Bell~\cite{Bell:1982} suggested that ``the field would be significantly advanced by banning [the word `measurement'] altogether, in favour for example of the word `experiment' ''.   The question then arises:  once the classical concept of measurement has gone up in smoke what, if anything, is left of the classical concept of measurement accuracy?  It will be seen that this is  a special case of the more general question, which lies at the heart of all the disputes about quantum foundations:  once the classical concept of measurement has gone up in smoke what, if anything, is left of the classical concept of empirically accessible reality?  The problem is consequently of a rather peculiar kind.  Physics encompasses an enormous spectrum of problems, ranging from nuts-and-bolts problems such as measuring a length precisely, to deep philosophical questions.  The error-disturbance principle is unusual because it directly connects the two ends of the spectrum.  On the one hand it has, as we stressed above, an immediate, down-to-earth practical relevance.  On the other hand  we would argue that one of the factors obstructing progress---the reason almost half a century elapsed before people started to get seriously to grips with the problem---was the obscurities of the Copenhagen Interpretation.  It thus provides   a  riposte to the suggestion that the interpretational issues are  practically unimportant.  

Although the connections with the interpretation problem are not the main point of this paper, they are  part of the underlying motivation.  It is therefore appropriate to say something about them in this introductory section\footnote{We will give a  more detailed discussion in a subsequent publication}.
Let us begin by observing that Heisenberg himself did not propose, or even conjecture an error-disturbance principle.    He did, of course, construct his famous microscope argument~\cite{Heisenberg:1927,Heisenberg:1930}, which has suggested to many  that he had in mind such a principle.  However, that is based on a misunderstanding of the point of the microscope argument (that is, what \emph{Heisenberg} saw as the point).  What that point was emerges most clearly in Von Neumann's account~\cite{Neumann:1932}, where it is made completely explicit that the function of the microscope argument is   to give  intuitive support to the inequality  proved by Kennard~\cite{Kennard:1927} and Weyl~\cite{Weyl:1928} (the latter attributing the result to Pauli), and to its subsequent generalization by Robertson~\cite{Robertson:1929} and Schr{\"{o}}dinger~\cite{Schrodinger:1930}.  In his 1927 paper~\cite{Heisenberg:1927} Heisenberg was less explicit.  At the time he wrote the paper the Kennard-Pauli-Weyl proof was yet to come and, perhaps for that reason, he gave the microscope argument pride of  place.  However, he was  using it to support his original prototype for the uncertainty principle, namely the order of magnitude estimate $p_1 q_1 \sim h$, where $p_1$, $q_1$ are the standard deviations of the $p$, $q$ probability distributions scaled by a factor $\sqrt{2}$.  There is no indication that he envisaged, in addition to this statement, an entirely different error-disturbance principle.  

Nevertheless, although Heisenberg did not in fact propose  an error-disturbance  principle, one may  feel that he \emph{should} have done so, for  it is strongly suggested by the considerations in his 1927 paper (to that extent we agree with Busch \emph{et al}~\cite{Busch:2007}, that it is ``latent'' in what he says).  Reflecting on the microscope experiment it seems intuitively evident that the measurement of position really is (in some sense) less than perfectly accurate, and that the electron really will (in some sense) be disturbed by the photon.  The situation seems to be crying out for proper quantum mechanical analysis.  Yet it evidently did not seem that way to Heisenberg.  Nor, apparently, did it seem that way to most other people before the 1960s.  During the  period between 1927 and the 1965 paper of Arthurs and Kelly~\cite{Arthurs:1965}  one finds various paraphrases and elaborations of the  statements in Heisenberg's original paper but we are not aware of any clear statement of  the error-disturbance  principle conceived as a proposition distinct from the Kennard-Pauli-Weyl inequality,
or any recognition of the fact that a quantum mechanical definition of measurement accuracy is needed.  The question arises: Why is it that Heisenberg and so many others failed to draw what seems to most people now the obvious conclusion  from his uncertainty paper?  The answer, we suggest, is that their understanding was obstructed by one of the  features of the Copenhagen interpretation.  

In the words of Bell~\cite{Bell:1987a} the Copenhagen interpretation\footnote{Of course, the Copenhagen interpretation is a somewhat nebulous entity.  For one thing different proponents had different ideas (see Faye~\cite{Faye:2014} and references cited therein).  For another the views of individual proponents evolved in the course of time (see Plotnitsky~\cite{Plotnitsky:2013} for the evolution in Bohr's thinking, Camilleri~\cite{Camilleri:2009a} for the evolution in that of Heisenberg).   It is consequently impossible to give a characterization which is both concise and fully adequate.    However, it appears to us that Bell's one sentence summary does  identify a theme which, in one form or another, is common to all the variants.} divides the world ``into speakable apparatus \dots that we can talk about \dots and unspeakable quantum system that we can not talk about'' (ellipses in the original).  This idea has been hard to maintain since the 1970's, when it was realized, in connection with the problem of gravity-wave detection, that the error-disturbance principle is relevant to highly accurate measurements of a \emph{macroscopic} oscillator~\cite{Braginsky:1980,Braginsky:1992}.  Such an oscillator is just as speakable as any other piece of laboratory apparatus; yet at the same time we need to analyze its behaviour quantum mechanically.   But in the early days of quantum mechanics the unspeakability of quantum systems was accepted by almost everyone.  Thinking of the quantum world as  ineffable, and beyond the reach of thought~\cite{Plotnitsky:2013}---forgetting that the quantum world is the  one in front of our noses---encouraged the perception that quantum mechanical measurements are so utterly different from classical ones that no points of contact with classical concepts are possible.  In particular, it encouraged the assumption that the classical concept of error cannot carry over to quantum mechanics in any shape or form.   This, we would suggest, is why Heisenberg did not follow through on what now seems the obvious implication of his microscope argument, and formulate an error-disturbance principle.  He did not do so because he rejected the very idea of a quantum error, or a quantum disturbance.

Corresponding to the idea that there are two different worlds, speakable and unspeakable, there is a widespread assumption that there are two  kinds of measurement, classical and quantum. 
If highly accurate determinations of the center-of-mass motion of a macroscopic object are to be treated as quantum measurements then it is hard to see how one can consistently make such a distinction.  Instead, one seems forced to the view that every measurement is a quantum measurement, measurements with a meter rule not excluded.  To be sure, low precision measurements with a meter rule permit simplifying assumptions which cease to be valid as one increases the accuracy.  However, that is a purely a matter of practical convenience, not the signal of a fundamental difference of kind.  In the case of kinematics we continue to use the Newtonian theory when analyzing low velocity motion, without taking this to mean that there is a fundamental  difference of kind between the relativistic momentum of a space-ship travelling at near light speed and the Newtonian momentum of a train on the London underground.  Similarly in the case of measurements:  we need a unified description.  

In particular, we need a unified description of measurement errors.  The statement, that the kind of sophisticated measurement on a macroscopic object which demands a quantum  analysis is more accurate than a  commonplace  measurement  with a meter rule, tacitly assumes that there is a single concept of accuracy applicable to both. Otherwise, we would not have the basis for a comparison.  In the case of kinematics the Newtonian definition of momentum is an approximation to the relativistic definition, valid for low velocities.  In the same way, we need an overarching quantum definition of error, which effectively reduces to the classical one in limiting cases. At first sight this may seem impossible, since quantum mechanics requires us to drop the assumption that a measurement ascertains the pre-existing value of a specified observable.  However, on  further reflection it will be seen that even on classical assumptions one is never able  to directly compare the measured value with the pre-existing true one. In  classical physics as in  quantum physics, measured values are the only ones available.   It follows that, although in classical principle the error is the difference between the measured value and the  true one, in point of classical practice it must be possible to do everything using measured values only.

The purpose of this paper is to make a small beginning on the task of constructing a unified theory of measurement.  We focus on Busch, Lahti and Werner's (BLW's) criticisms~\cite{Busch:2014b,Busch:2014c,Busch:2014d,Busch:2014e} of the operator approach\cite{Appleby:1998b,Ozawa:2003,Ozawa:2003a,Ozawa:2013,Ozawa:2014} to the description of quantum errors and disturbances.  Their criticisms raise some issues which are highly relevant to the above discussion, and which need to be settled if we hope to make progress.  It should be stressed, that although our conclusion is that the operator approach is more useful than BLW allow, we are far from rejecting everything they say.  In particular, we completely agree with them on what is, perhaps, the most essential point, that quantum errors and disturbances need to be defined operationally.   Moreover, in defending the operator approach, it is no part of our intention to impugn the distributional approach they favour.   No one would say that the RMS characterization of an ordinary uncertainty is either ``better'' or ``worse'' than an entropic characterization.  Rather one has different quantitative measures each of which has advantages and disadvantages.  Similarly here.  The task is not to single out one particular approach as somehow canonical, but rather to achieve a clear understanding, at the basic conceptual level, of what is meant by the words ``error'' and ``disturbance'' in a quantum mechanical context, and of the different ways of quantifying the concepts.

There are  two  versions of the operator approach (or \OP approach as we will call it from now on).  BLW's criticisms are largely directed against the state-dependent version  proposed by Ozawa~\cite{Ozawa:2003,Ozawa:2003a}. However, we had previously proposed a state-independent version~\cite{Appleby:1998b}.  Both versions are relevant to our discussion.  In Section~\ref{sec:RMS} we compare and contrast them.

Section~\ref{sec:Justification} is the core of the paper.  We begin with the  classical concepts of error and disturbance.   We show that there at least two ways  to reformulate them in a manner which does not involve a comparison with pre-existing values.   We then show that the reformulated definitions have natural quantum generalizations, which we  call the  \DM and \CO definitions.   The \DM and \CO errors are thus candidates for the overarching concept of measurement accuracy which, we argued above, is necessary if one wants to construct a unified theory of measurement, in which every measurement is seen as quantum.  They also have an important bearing on BLW's criticism of the \OP approach.  As BLW correctly observe, the \OP definitions are non-operational.  However, the \DM and \CO definitions \emph{are} operational.  Moreover, the \OP quantities are upper bounds on the corresponding \DM and \CO quantities.  This gives indirect operational meaning to the \OP quantities.  Specifically, it means that if one of the \OP quantities is small, then there are at least two well-defined operational senses in which the measurement is accurate or non-disturbing.  The situation when an \OP quantity is large is more problematic.  In the state-independent case it is possible that smallness of the \OP error/disturbance is both necessary and sufficient for the measurement to be accurate/non-disturbing in a well-defined operational sense.  However, we have not been able to prove this.

In Section~\ref{sec:Comparison} we analyze BLW's  objections to the \OP approach in the light of the foregoing.  BLW contrast the operator approach with what they call a distributional approach.  It is to be observed, however, that the \DM and \CO quantities are also defined distributionally.   Since the \OP quantities owe their physical meaning to their connection with the \DM and \CO quantities, it follows that the \OP quantities  are indirectly distributional.  In short, the problem is not to decide between a distributional approach and some other, completely different approach. Rather it is to decide between two different kinds of distributional approach.  As with all such questions, the answer is relative to the situation of interest.  We show that there is at least one important class of physical problems for the which the \DM error, and by extension the \OP error, are clearly more appropriate than the definition which BLW favour, based on the Wasserstein 2-deviation.  

Finally, in the Appendix, we give a more careful proof of the error-disturbance and error-error relations than the one we presented in ref.~\cite{Appleby:1998b}.  In that earlier paper we skated over certain questions of domain and differentiability.  We here take the opportunity to fill in the missing details.

\section{The Operator Approach}
\label{sec:RMS}
In this section we outline the operator characterization of quantum errors and disturbances.  Our aim is purely descriptive.  We justify the approach, and respond to the various criticisms which have been made of it,  in subsequent sections.  

Consider a classical measurement of position.  Let $x_{\rm{i}}$, $p_{\rm{i}}$ be the position and momentum immediately before the measurement and let $x_{\rm{f}}$, $p_{\rm{f}}$ be their values immediately after it.  Let  $\mu_{\rm{f}}$ be the final value of the pointer observable.  Then the error in the measurement of position is $\mu_{\rm{f}}-x_{\rm{i}}$ and  the disturbance to the momentum is  $p_{\rm{f}} -p_{\rm{i}}$ (classical physics does not, of course, require there to be a disturbance to the momentum, but such a disturbance is perfectly possible).  On the level of formal analogy it is natural to ask what happens if one replaces the classical variables in these expressions with the corresponding Heisenberg picture  operators.   Let $\mcl{H}_{\rm{s}}$ and $\mcl{H}_{\rm{a}}$ be the Hilbert spaces for the system and apparatus respectively, and assume that system+apparatus are initially in the product state $\sy \otimes \ap$, where $\sy$ is density matrix of the system and $\ap$ is the density matrix of the apparatus. Let $\hat{U}\colon \mcl{H}_{\rm{s}} \otimes \mcl{H}_{\rm{a}} \to \mcl{H}_{\rm{s}} \otimes \mcl{H}_{\rm{a}}$ be the unitary operator describing the measurement interaction, let 
\ea{
\hat{x}_{\rm{i}} &= \hat{x}\otimes I, & \hat{p}_{\rm{i}} & = \hat{p} \otimes I, & \hat{\mu}_{\rm{i}} &=  I \otimes \hat{\mu},
}
 be the position, momentum and pointer Heisenberg picture observables immediately before the measurement interaction commences, and let
\ea{
\hat{x}_{\rm{f}} &= U^{\dagger} \hat{x}_{\rm{i}} U, & \hat{p}_{\rm{f}} &= U^{\dagger} \hat{p}_{\rm{i}} U, & \hat{\mu}_{\rm{f}} &= U^{\dagger} \hat{\mu}_{\rm{i}} U,
}
be the Heisenberg picture observables immediately after the interaction has finished.   Formal analogy with the classical case then suggests that we define\footnote{In  Appleby~\cite{Appleby:1998,Appleby:1998a,Appleby:1998b,Appleby:1999} we also introduced the predictive error operator $\hat{\mu_{\rm{f}}}-\hat{x}_{\rm{f}}$.  In this paper we will focus exclusively on the retrodictive operator since that is the one which gives rise to conceptual difficulties.}
\ea{
\eop{X} &= \hat{\mu}_{\rm{f}} - \hat{x}_{\rm{i}}  & \dop{P} &= \hat{p}_{\rm{f}} - \hat{p}_{\rm{i}}
}
We refer to $\eop{X}$ (respectively, $\dop{P}$) as the error (respectively, disturbance) operator.  We then obtain a numerical characterization of the error by defining
\ea{
\eva{x} &=\bigl(\Tr( \eop{X}^2 (\sy \otimes \ap) \bigr)^{\frac{1}{2}},
\\
\intertext{and a numerical characterization of the disturbance by defining}
\dva{p}  &=\bigl(\Tr( \dop{P}^2 (\sy \otimes \ap) \bigr)^{\frac{1}{2}}
}
We label the quantities with a superscript $\sy$ because, while the apparatus ``ready'' state  $\ap$ is assumed to be always the same,  the system state $\sy$ can vary.  
The  operators $\eop{X}$, $\dop{P}$ are  unbounded which means that the quantities $\eva{x}$, $\dva{p}$ are not defined for every state $\sy \otimes \ap$.  In the following we will always assume that $\sy$ is in the set of physical states $\mcl{P}$ defined in the Appendix.  If this is true then, provided that $\ap$ is appropriately chosen,  the expectation value $\Tr(M(\sy\otimes \ap)$ is well defined and finite  for every monomial $M$ in $\hat{x}_{\rm{i}}$, $\hat{x}_{\rm{f}}$, $\hat{p}_{\rm{i}}$, $\hat{p}_{\rm{f}}$, $\hat{\mu}_{\rm{i}}$,  $\hat{\mu}_{\rm{f}}$.

Of course, we have not yet justified the interpretation of $\eva{x}$ and $\dva{p}$  as an error and disturbance (beyond noting the  formal analogy with classical physics which, though suggestive, is clearly not sufficient to justify the proposal).    We defer  a proper justification to the next section and focus here on the question, whether there exists an error-disturbance relation expressible in terms of these quantities.  In various special cases\cite{Arthurs:1965,Arthurs:1988,Yuen:1982,Ishikawa:1991,Ozawa:1991,Appleby:1998,Appleby:1998a}
 one does indeed have
 \ea{
 \eva{x} \dva{p} \ge \frac{\hbar}{2}
 \label{eq:errDisNaive}
 }
analogous to the ordinary uncertainty relation $\Delta x \Delta p \ge \hbar/2$.  However, as we showed in ref.~\cite{Appleby:1998b},  it is easy to see that the inequality cannot be generally valid. Consider a simple model for the measurement process, in which the pointer observable $\hat{\mu}$ is the position of a particle having momentum $\hat{\pi}$ and in which   the measurement rotates the system particle position onto the pointer particle position, so that
\ea{
\hat{\mu}_{\rm{f}} = \hat{x}_{\rm{i}}.
}
Such a rotation is effected by
\ea{
\hat{U} &= e^{-\frac{i\pi \hat{H}}{2\hbar}}
\label{eq:cexUn}
}
where
\ea{
\hat{H} &= \hat{x}_{\rm{i}} \hat{\pi}_{\rm{i}} - \hat{\mu}_{\rm{i}} \hat{p}_{\rm{i}}.
}
(so if $\hat{x}$, $\hat{\mu}$ were different components of the position of a single particle in three dimensions $\hat{H}$ would be a component of the angular momentum operator).  The fact that $\hat{\mu}_{\rm{f}} = \hat{x}_{\rm{i}}$ means that  $\eop{X} = 0$.  It is  easy to see that $\dop{P} = -\hat{\pi}_{\rm{i}} - \hat{p}_{\rm{i}}$.  So this is a measurement for which the error is zero while the disturbance is finite for every physical state.  

Although we are mainly concerned with the error-disturbance relation in this paper it is worth noting that exactly the same argument shows~\cite{Appleby:1998b} that the error-error relation
\ea{
\eva{x} \eva{p} \ge \frac{\hbar}{2}
\label{eq:errerrSD}
}
for a joint measurement of position and momentum cannot be valid in general.  Indeed, consider a joint measurement in which the interaction of the particle with the position pointer is  described by the unitary in Eq.~\eqref{eq:cexUn}, while the momentum pointer just goes along for the ride, without interacting at all.  One then  has $\eva{x}=0$ and $\eva{p} = \sqrt{\la (\hat{\mu}_{\rm{P, i}}-\hat{p}_{\rm{i}})^2 \ra}$ (where $\hat{\mu}_{\rm{P,i}} = \hat{\mu}_{\rm{P,f}}$ is the momentum pointer position).  Even though the momentum is not really being measured at all, $\eva{p}$ is still finite for every physical state.  So Inequality~\eqref{eq:errerrSD} is violated for every physical state. 

The fact that Inequalities~\eqref{eq:errDisNaive},\eqref{eq:errerrSD} are not generally valid was noted by 
ourselves\footnote{In ref.~\cite{Appleby:1998b} we framed the discussion in the context of joint measurements of position and momentum.  However, the example we used to make the point was the model interaction described by Eq.~\eqref{eq:cexUn} above, in which the momentum pointer does not interact with the system at all, and which can therefore just as well be regarded as a single measurement of position only.  Generally speaking any measurement of position only can be regarded as a joint measurement in which the momentum pointer does not interact with the system.  Conversely, a joint measurement of both position and momentum becomes a single measurement of position only if we simply disregard the momentum reading.
}~\cite{Appleby:1998b} and subsequently by Ozawa~\cite{Ozawa:2002,Ozawa:2003,Ozawa:2003a,Ozawa:2003b,Ozawa:2004a}; in the case of \eqref{eq:errerrSD} also by Hall~\cite{Hall:2004}.    We, Ozawa, and Hall responded to these facts by trying to find  alternative inequalities which are generally valid.  However, we on the one hand, and Ozawa and Hall on the other,  were led in different directions.  We begin by describing our approach to the problem, since this came first in point of time.  

The essential point will emerge most clearly if we start with the  violation of Inequality~\eqref{eq:errerrSD} by the measurement described by Eq.~\eqref{eq:cexUn}.  For this measurement it is not simply that the product $\eva{x}\eva{p}$ is less than $\hbar/2$ for a certain subset of initial  states.  The product is in fact strictly zero for every possible initial state.  However, it would  be rash to conclude from this that the measurement is  in some sense ``best possible''.  As we noted above, the momentum pointer does not interact with the system, which means that so far as momentum is concerned the measurement is not only not highly accurate, it cannot properly be described as a measurement at all.  It is true that $\eva{p}$ is small for a certain, highly specific set of initial states.  However, that is not a reason for describing the measurement as accurate.   Consider the following scenario:
\begin{quote}
Alice goes to Bob's shop and buys what Bob says is a highly accurate ammeter.  However, when she gets home she finds that the needle is stuck at the $1$ amp position.  When she goes back to complain Bob is unrepentant.  He insists that the meter is indeed highly accurate provided one uses it to measure a $1$ amp current.
\end{quote}
Clearly, Alice will not be satisfied with this response.  No more would she be satisfied with the claim, that the interaction described by  Eq.~\eqref{eq:cexUn} gives a highly accurate measurement of momentum.

This example shows that the smallness of the product $\eva{x}\eva{p}$ is not always the signature of a highly accurate joint measurement of position and momentum.  Similar remarks apply to the product $\eva{x}\dva{p}$.   Consider, for instance, a ``measurement'' for which $\hat{U}$ is the identity, so that there is no coupling whatever between system and apparatus.  Here $\dva{p}$  is zero for every possible initial  state while $\eva{x}$ is always finite and sometimes small.  Yet, as in the broken ammeter example, it would be an abuse of language to describe this as a measurement of position which is always non-disturbing and sometimes highly accurate.

In ref.~\cite{Appleby:1998b} these considerations led us to look for replacements for the products $\eva{x}\eva{p}$, $\eva{x}\dva{p}$ whose smallness can unequivocally be regarded as the signature of a  measurement which is in some sense ``good''.  In the broken ammeter example what makes Bob's claim absurd is the fact that  an accurate classical ammeter is one for which the measured value is close to the true one, not just for one particular current, but for every current within a wide range.  Applying the same principle to the quantum case suggests that we define the error by
\ea{
\evb{x} &= \sup_{\sy\in \mcl{P}} \bigl(\eva{x}\bigr)
\label{eq:errXSupDefA}
}
where $\mcl{P}$ is the set of physical states, as defined in Appendix~\ref{sec:Technical}.  As we saw above, the smallness of $\eva{x}$ for some particular $\sy$ is consistent with the apparatus being completely decoupled from the system, so that it is not really measuring anything. But if $\evb{x}$ is small it means that $\eva{x}$ is small for every possible  state and we clearly are entitled to say that the measurement is  highly accurate (taking into account the discussion in Section~\ref{sec:Justification}).  Similar principles apply to the concept of disturbance.  Consider, for instance, the measurement described by Eq.~\eqref{eq:cexUn}, which rotates $\hat{\mu}$ onto $\hat{x}$.  For this measurement  $\dva{p}$ will be small for certain special choices of $\sy$ and $\ap$.  However, it will typically be large.  A medical procedure would not usually be described as non-invasive merely on the grounds that it can occasionally happen that the patient escapes almost intact.   Similarly here.  We accordingly define the disturbance to be
\ea{
\dvb{p} &= \sup_{\sy\in \mcl{P}} \bigl(\dva{p}\bigr).
\label{eq:errPSupDefB}
}
With these definitions it can be shown 
\ea{
\evb{x} \dvb{p} \ge \frac{\hbar}{2}
\label{eq:MyErrDisIdeal}
}
where we use the convention, here and elsewhere, that a product of the form $q \times \infty$ counts as infinite, even if $q=0$.
One can also prove a universally valid version of the error-error relation for a joint measurement of position and momentum
\ea{
\evb{x} \evb{p} \ge \frac{\hbar}{2}
\label{eq:MyErrErrIdeal}
}
where $\evb{p}$ is defined by taking the supremum of $\eva{x}$. 
In ref.~\cite{Appleby:1998b} we gave a proof of these relations which glossed over some questions to do with domains of definition and differentiability.  A completely rigorous proof is given in Appendix~\ref{sec:Technical} below.

The quantities $\evb{x}$, $\dvb{p}$ are not without interest, as we discuss below.  However, they are not the appropriate definitions for a real measuring instrument.  The demand that $\evb{x}$ be small is the demand that $\eva{x}$ be small, not only when $\sy$ is a wave-packet localized in the vicinity of the apparatus, but also when $\sy$ is a wave-packet localized on the other side of the cosmic event-horizon.  Clearly, this is  not a reasonable demand to make of a practical laboratory instrument, which is only designed to give accurate readings for a restricted set of input states.  In ref.~\cite{Appleby:1998b} we accordingly proposed the following modified definitions
\ea{
\evc{x} &= \sup_{\sy \in \mcl{R}} \bigl( \eva{x}\bigr),
\\
\dvc{p} &= \sup_{\sy \in \mcl{R}} \bigl( \eva{x}\bigr)
} 
where the supremum is now taken over a proper subset $\mcl{R}$ of the set of physical states.   We took $\mcl{R}$ to be a set of physical states for which the mean values $\Tr(\hat{x}\sy$, $\Tr(\hat{p}\sy)$ lie in a rectangular region of phase space with sides $\lX$, $\lP$, and satisfying certain additional conditions.
We then proved the inequalities
\ea{
\evc{x} \dvc{p} + \frac{\hbar}{\lX} \evc{x} + \frac{\hbar}{\lP}\dvc{p} \ge \frac{\hbar}{2},
\label{eq:MyErrDisReal}
\\
\evc{x} \evc{p} + \frac{\hbar}{\lX} \evc{x} + \frac{\hbar}{\lP}\evc{p} \ge \frac{\hbar}{2},
\label{eq:MyErrErrReal}
}
where we again use the convention that a product of the form $q \times \infty$ counts as infinite, even if $q=0$.
It will be observed that in the limit as $\lX$, $\lP \to \infty$ we recover Inequalities~\eqref{eq:MyErrDisIdeal}, \eqref{eq:MyErrErrIdeal}.  As with Inequalities~\eqref{eq:MyErrDisIdeal}, \eqref{eq:MyErrErrIdeal}, the proof of Inequalities~\eqref{eq:MyErrDisReal}, \eqref{eq:MyErrErrReal} which we gave in ref.~\cite{Appleby:1998b} glossed over certain   details.  We give a completely rigorous proof in Appendix~\ref{sec:Technical} below, where we also take the opportunity to strengthen the statement somewhat.

Let us now turn to the approach of Ozawa~\cite{Ozawa:2003,Ozawa:2003a,Ozawa:2004a,Ozawa:2003b} and Hall~\cite{Hall:2004}.  In our approach we replaced the state-dependent definitions $\eva{x}$, $\dva{p}$ with the quantities $\evb{x}$, $\dvb{p}$ and $\evc{x}$, $\dvc{p}$ and proved inequalities applying to those.  Ozawa~\cite{Ozawa:2003,Ozawa:2003a}, by contrast, kept with the state-dependent definitions and showed
\ea{
\eva{x}\dva{p} + \uca{p} \eva{x} + \uca{x} \dva{p} \ge \frac{\hbar}{2}
\label{eq:OzawaErrDis}
}
where $\uca{x}$, $\uca{p}$ are the ordinary  uncertainties in the state $\sy$.  He also showed that~\cite{Ozawa:2004a,Ozawa:2003b}, for a joint measurement of position and momentum, 
\ea{
\eva{x}\eva{p} + \uca{p} \eva{x} + \uca{x} \eva{p} \ge \frac{\hbar}{2}
\label{eq:OzawaErrErr}
}
It will be observed that these relations have a similar mathematical form to our~\eqref{eq:MyErrDisReal}, \eqref{eq:MyErrErrReal}.  Hall~\cite{Hall:2004} proved a relation similar to \eqref{eq:OzawaErrErr}.  Other modifications and improvements have also been proved~\cite{Weston:2013,Branciard:2014,Ozawa:2014a}.

The reader should not conclude from our earlier discussion that we have any  objection to the state-dependent definitions employed by Ozawa, Hall and others.  Asking whether a state-independent definition is better than a state-dependent one is like asking whether a hammer is better than a screw-driver.  The answer to all such questions, concerning the suitability of a tool, is relative to the use to which it is  put.  The fact that Bob, in the broken ammeter example, makes an inappropriate use of it does not invalidate the idea, that the classical error is the difference between the measured value and the true one.  Similarly here.  It is true that there exist quantum analogues of the broken ammeter---processes which do not properly count as a measurement for which the state-dependent error is small.  Nevertheless, the state-dependent error has a well-defined physical meaning (as we discuss in Section~\ref{sec:Justification}) and this makes it a potentially useful tool.   State-independent definitions, such as the ones proposed by ourselves or  BLW~\cite{Busch:2013,Busch:2014a}, have the advantage that they supply what BLW call an overall figure of merit; while state-dependent definitions, if not handled with care, can  lead to unreasonable conclusions.   But, as Rozema \emph{et al}~\cite{Rozema:2013} point out, state-independent definitions  have the  disadvantage that they are insensitive to  fine, state-dependent details which can be  important.  The state-dependent error can be used to analyze those details.  It is to be observed, furthermore, that  the state-dependent quantities $\eva{x}$, $\dva{p}$ are the limits of $\evc{x}$, $\dvc{p}$ as $\mcl{R}$ is shrunk to a single point.  If one takes the view that  use of $\eva{x}$, $\dva{p}$ is in all circumstances inappropriate then it is hard to see how one can avoid taking the view that use of $\evc{x}$, $\dvc{p}$ is also inappropriate when $\mcl{R}$ is very small.  Which raises the question:   ``Just how large has $\mcl{R}$ got to be in order for the use of  $\evc{x}$, $\dvc{p}$ to be justified?''  It is difficult to see how the answer can be other than arbitrary.     It appears to us that such discussions are fruitless, and that the solution to the quandary ``state-dependent or state-independent?'' is not to regard it as a quandary.  Instead of making a once-and-for-all choice we are free to use either or both, in a manner adjusted to the question of interest\footnote{See the end of Section~\ref{sec:Comparison} for further discussion of the points raised in this paragraph.}.

So far from being rivals Ozawa's inequalities and ours are closely related.
Let $\mcl{R}$ be any region satisfying the condition of Theorem~\ref{thm:mainresult} in the Appendix.   If we take the supremum on both sides of Ozawa's inequality~\eqref{eq:OzawaErrDis} we obtain the relation
\ea{
\evc{x}\dvc{p} + K_x \evc{x} +K_p \dvc{p} \ge \frac{\hbar}{2}
}
where
\ea{
K_x &= \sup_{\hat{\rho}\in\mcl{R}} (\uca{p}) & K_p &= \sup_{\hat{\rho}\in \mcl{R}}(\uca{x}).
}
This is weaker than our~\eqref{eq:MyErrDisReal} if $\lX$, $\lP$ are large but stronger if they, and the region $\mcl{R}$ are small.   It is probably fair to say that  that an experimenter will never be  committed to the proposition that the system state is \emph{precisely} $\sy$.  So what is in question is always a set of states $\mcl{R}$.  If $\mcl{R}$ is small  one will want to use Ozawa's inequalities, but if it is large one will want to use ours (provided $\mcl{R}$ satisfies the condition of Theorem~\ref{thm:mainresult}).

Although $\eva{x}$ will, in practice, only be small for a restricted set of states,  the limiting situation, when it becomes zero for all $\sy \in \mcl{P}$, is still conceptually important.  It can be shown (Appleby~\cite{Appleby:1999}, Ozawa~\cite{Ozawa:2004}, Busch~\cite{Busch:2004}) that the condition $\evb{x} =0$ is both necessary and sufficient for the distribution of measured values to be $\la x | \sy | x\ra$.  No real measuring instrument could have precisely this distribution of measured values for every input state $\sy$; in particular, it cannot do so for states such that the support of$\la x |\sy | x\ra$ is not compact\footnote{In this connection it may be worth remarking that the $x$ and $p$ space wave-functions cannot both have compact support.  So for every physical state $\sy$ at least one of the two distributions $\la x |\sy | x\ra$, $\la p |\sy |p\ra$ must be practically unrealizable.}.  Nevertheless the idea that $\la x |\sy | x\ra$ is the probability distribution for a measurement of position has played a fundamental role in physical thinking ever since Born~\cite{Born:1926} first proposed it\footnote{Strictly speaking Born only proposed that we interpret $\la \mbf{p} | \sy |\mbf{p}\ra$ as a probability distribution.}.  There is no problem here provided we understand the proposal to be, not that $\la x |\sy | x \ra$ is an operational distribution (one corresponding to an actual measurement), but that it is the canonical, or target distribution to which an operational  distribution may conform more or less well.  

A similar result can be proved for joint measurements minimizing the product $\evb{x}\evb{p}$:  Namely, that the product is minimized if and only if the distribution of measured values is the Husimi function (Appleby~\cite{Appleby:1999}, Werner~\cite{Werner:2004}, Busch~\emph{et al}~\cite{Busch:2007}).  In Appleby~\cite{Appleby:2000a} we extended the analysis to measurements of angular momentum, and showed that a determination of spin-direction is optimal if and only if the  distribution of measured values is $\la \mbf{n} | \sy |\mbf{n}\ra$, where $|\mbf{n}\ra$ is  a suitably normalized $\SU(2)$ coherent state.

\section{Physical Interpretation of the Operator Definitions}
\label{sec:Justification}
We now come to the problem of interpreting the quantities defined in the last section. 
Quantum mechanics forces us to drop the classical assumption that a measurement ascertains the pre-existing value of a specified observable~\cite{Bell:1966,Kochen:1967,Clifton:2000,Appleby:2005c}.  Even if one postulates that the observable measured does have a pre-existing value, that  value must typically differ from the value found by measurement.  In the Bohm theory, for example, the result of a measurement of velocity is usually quite different from the postulated pre-existing velocity\cite{Bohm:1993,Holland:1993,Appleby:1999c,Appleby:1999d}.  Classically, the error is usually defined in terms of the difference between the measured value and the pre-existing true one.  It might consequently seem that, in abandoning the idea that measurements ascertain pre-existing values, we are obliged also to abandon the concept of experimental error (in the Introduction we argued that that is exactly how it did seem to, for example, Heisenberg).   We begin by showing that that is not the case.  Specifically we describe a classical model for which the classical error can be defined in a way that does not involve a comparison with pre-existing values.  We then show that this alternative definition naturally carries over to quantum mechanics..

The example we consider is that of a one-dimensional classical gas.  Let $x$ and $p$ be the position and momentum of a  particular particle in this gas,  and let $\syc(dx dp)$ be the phase space probability measure.  Suppose we  measure $x$.  Let $\mu_{\rm{f}}$ be the pointer position after the measurement.  We assume that the measurement process is stochastic and is described by a transition kernel\footnote{$\chi(A| x,p)$ is not in general a conditional probability because the definition $P(A|B) = P(A\cap B) /P(B)$ breaks down if $P(B)=0$.  However, it becomes a conditional probability if one assumes there is non-zero probability of the initial position and momentum being precisely $x$, $p$.} $\chi(d\mu_{\rm{f}} | x,p)$ such that the expectation value of a function $f(x,p,\mu_{\rm{f}})$  is  given by (see, for example, Cinlar~\cite{Cinlar:2011}) 
\ea{
\la f \ra^{\syc} = \int \left( \int  f(x,p,\mu_{\rm{f}}) \chi(d\mu_{\rm{f}} | x,p)\right) \syc(dx dp).
}
The superscript $\syc$ is to serve as a reminder that $\syc$ is  arbitrary, unlike $\chi$ which characterizes the measurement interaction and is therefore fixed. Define\footnote{$\sigma_{\rm{ce}}(x,p)$ is the RMS difference between the measured value and the pre-existing true one if one makes the physically unrealistic assumption that the initial position and momentum are known to be precisely $x$ and $p$.}
\ea{
\sigma_{\rm{ce}}(x,p) &= \left( \int (\mu_{\rm{f}}-x)^2 \chi(d\mu_{\rm{f}}|x,p)\right)^{\frac{1}{2}}.
}
It will be seen that $\sigma_{\rm{ce}}(x,p)$ is the RMS difference between the measured value and the pre-existing true one when $\syc$ is concentrated on the single point $(x,p)$.  
 We then define the classical error by
\ea{
\Delta_{\rm{ce}} x = \sup_{x,p\in \fd{R}} \left(\sigma_{\rm{ce}}(x,p)\right).
\label{eq:classErrDefA}
}
Of course, this definition is open to the same objection as the quantity $\evb{x}$ defined in the last section; namely, that it is likely to be infinite for a realistic model.  However, this need not detain us because we are not interested in the model for its own sake, but only as a conceptual bridge which will take us from classical intuition to a reasonable quantum mechanical definition of measurement error.  Now let 
\ea{
\bar{x}^{\syc} &= \int x  \syc(dx dp) , & \Delta^{\syc} x &=\left( \int (x-\bar{x}^{\syc})^2 \syc(dx dp)\right)^{\frac{1}{2}}
}
 be the mean and standard deviation relative to the measure $\syc$.  Then
\ea{
\sqrt{\la (\mu_{\rm{f}}- \bar{x}^{\syc})^2 \ra^{\syc}} &=\left( \la(\mu_{\rm{f}}-x)^2 \ra^{\syc} + \la (x-\bar{x}^{\syc})^2\ra^{\syc} - 2 \la (x-\bar{x}^{\syc})(\mu_{\rm{f}}-x) \ra^{\syc}  \right)^{\frac{1}{2}}
\nn
&\le \Delta^{\syc} x + \sqrt{\la (\mu_{\rm{f}}-x)^2 \ra^{\syc} }
\nn
&\le \Delta^{\syc} x + \Delta_{\rm{ce}} x.
}
Note that $\sqrt{\la (\mu_{\rm{f}}- \bar{x})^2 \ra^{\syc}}$, $\Delta^{\syc} x$ are $\syc$-dependent, but $\Delta_{\rm{ce}} x$ is not.  The inequality is actually tight.  To see this  choose a sequence $(x_n,p_n)$ such that $\sigma_{\rm{ce}}(x_n,p_n)  \to \Delta_{\rm{ce}} x$, and let $\syc_n$ be the measure concentrated on the point $(x_n,p_n)$.  Then
\ea{
\sqrt{\la (\mu_{\rm{f}}- \bar{x}^{\syc_n})^2 \ra^{\syc_n}} - \Delta^{\syc_n} x &\to \Delta_{\rm{ce}} x.
}
So
\ea{
\Delta_{\rm{ce}} x &= \sup_{\syc \in \mcl{M}} \left( \sqrt{\la (\mu_{\rm{f}}- \bar{x}^{\syc})^2 \ra^{\syc}} - \Delta^{\syc} x \right),
\label{eq:classErrDefB}
}
where $\mcl{M}$ is the set of all phase-space probability measures.  This gives us an alternative formula for the classical error.
We can derive a similar formula for the classical disturbance.  Let $p_{\rm{f}}$ be the momentum after the measurement and $\xi(dp_{\rm{f}} | x,p)$ the transition kernel such that
\ea{
\la f\ra^{\syc} &= \int \left( \int f(p_{\rm{f}},x,p) \xi(d p_{\rm{f}}|x,p)\right) \syc(dxdp).
}
Define the classical disturbance by
\ea{
\Delta_{\rm{cd}} p &= \sup_{x,p\in \fd{R}} \left( \int (p_{\rm{f}}-p)^2 \xi(dp_{\rm{f}}| x,p)\right)^{\frac{1}{2}}.
\label{eq:classDisDefA}
}
Then, by an argument similar to the one above, we find
\ea{
\Delta_{\rm{cd}} p &=\sup_{\syc \in \mcl{M}}\left( \sqrt{\la (p_{\rm{f}} - \bar{p}^{\sy})^2\ra^{\syc}} - \Delta^{\syc} p\right).
\label{eq:classDisDefB}
}
We are now free to throw away the ladder and take Eqs.~\eqref{eq:classErrDefB}, \eqref{eq:classDisDefB} to be the \emph{definitions} of $\Delta_{\rm{ce}} x$, $\Delta_{\rm{cd}} p$.  These alternative definitions do not involve a direct comparison between the measured value and the pre-existing one.  Consequently, they do not involve the expectation values of products of pairs of variables like $\mu_{\rm{f}}$ and $x$ which, in a quantum mechanical context, become non-commuting operators.  Instead they are framed in terms of the moments of probability distributions which are also defined in quantum mechanics.  They therefore generalize.  Just as we can classically, so in quantum mechanics, we can define the error and disturbance in terms of the increase in an RMS deviation from an initial state mean:
\ea{
\ebvb{x} &=\sup_{\sy \in \mcl{P}} \left( \ebva{x} \right), & \ebva{x} &= \sqrt{\Tr\bigl( (\hat{\mu}_{\rm{f}} -\bar{x}^{\sy})^2 (\sy\otimes \ap)\bigr) }-\uca{x},
\label{eq:ebvbDef}
\\
\dbvb{p} &= \sup_{\sy \in \mcl{P}} \left(\dbva{p} \right), & \dbva{p} &= \sqrt{\Tr \bigl(\hat{p}_{\rm{f}} -\bar{p}^{\sy})^2 (\sy \otimes \ap)\bigr) }-\uca{p},
\label{eq:dbvbDef}
}
where we employ the notations of the last section, together with
\ea{
\bar{x}^{\sy} &= \Tr(\hat{x} \hat{\rho}) , & \bar{p}^{\sy} &= \Tr(\hat{p}\hat{\rho}).
}
We may also define
\ea{
\ebvc{x} &=\sup_{\sy \in \mcl{R}} \left( \ebva{x} \right), & \dbvc{p} & = \sup_{\sy \in \mcl{R}} \left( \dbva{p} \right)
}
We will refer to these as the \DM definitions (``D'' for ``maximal increase in the RMS \textbf{d}eviation from the initial state mean'').  They are important because they  show that the Bell-Kochen-Specker theorem is not, as it might  seem, an insuperable obstacle blocking the path from the original classical intuition to a satisfactory quantum generalization.  On the contrary, if the concepts are appropriately formulated there is complete continuity between  classical and quantum  in this regard.  However, although the \DM definitions are valid and useful, they should not be regarded as canonical.  In the first place, there are  other classical definitions which also have natural quantum generalizations (as we will see in the next paragraph).  In the second place,  there is no reason to make classical physics  the  arbiter.  There may be useful quantum definitions which are not the generalization of any classical concept.

We arrive at another natural generalization of  classical ideas if we consider  measurements on a pair of correlated particles.   Suppose we have two particles with positions $\hat{x}_{A}$, $\hat{x}_{B}$ and momenta $\hat{p}_A$, $\hat{p}_B$, and suppose we measure $\hat{x}_{B}$.  Suppose that the unitary operator describing the measurement interaction is of the form $\hat{I} \otimes \hat{U}$, where $\hat{U}$ acts  on $\mcl{H}_{B}\otimes \mcl{H}_{\rm{ap}}$ in the product $\mcl{H}_A \otimes \mcl{H}_B \otimes \mcl{H}_{\rm{ap}}$ ($\mcl{H}_A$, $\mcl{H}_B$, $\mcl{H}_{\rm{ap}}$  being respectively the Hilbert spaces of particles $A$, $B$ and the apparatus).  Let $\hat{\mu}_{B}$ be the pointer position.  Classically, it would be natural to define the error to be the maximal increase 
in the correlation $\la (\mu_{B,\rm{f}} - x_{A})^2 \ra$ as compared to $\la ( x_{B,\rm{i}} - x_{A})^2\ra$, and the disturbance to be the maximal increase in the correlation $\la (p_{B,\rm{f}} - p_A)^2\ra$ as compared to $\la (p_{B,\rm{i}} -p_A)^2\ra$.  The point to notice here is that $[\hat{\mu}_{B,\rm{f}} ,\hat{x}_{A}] = [\hat{x}_{B,\rm{i}},\hat{x}_{A}] =[\hat{p}_{B,\rm{f}},\hat{p}_A] = [\hat{p}_{B,\rm{i}}, \hat{p}_A] =0 $.  So the classical definitions are expressed in terms of the moments of probability distributions which are also defined quantum mechanically.  They therefore generalize to
\ea{
\etvb{x} &= \sup_{\sy \in \mcl{P}} \left( \etva{x}\right), & \dtvb{p} &= \sup_{\sy \in \mcl{P}} \left( \dtva{x}\right),
}
where
\ea{
\etva{x} 
&=\sup_{\syj \in \spj} \left(\sqrt{\Tr\bigl( (\hat{\mu}_{B,\rm{f}} - \hat{x}_{A})^2 (\syj \otimes \ap)\bigr)}  \right.
\nn
& \hspace{2 in} - \left. \sqrt{\Tr(\bigl( \hat{x}_{B,\rm{i}} - \hat{x}_{A})^2 (\syj \otimes \ap)\bigr)}\right),
\label{eq:CeDe}
\\
\dtva{p}
&= \sup_{\syj \in \spj} \left(\sqrt{\Tr\bigl( (\hat{p}_{B,\rm{f}} - \hat{p}_{A})^2 (\syj \otimes \ap)\bigr)} \right.
\nn
&\hspace{2 in} \left. - \sqrt{\Tr(\bigl( ( \hat{p}_{B,\rm{i}} - \hat{p}_{A})^2 (\syj \otimes \ap)\bigr)}\right),
\label{eq:CdDd}
}
 $\spj$ being the set of physical states $\hat{\rho}_{AB}$ such that $\Tr_{A} (\hat{\rho}_{AB}) = \hat{\rho}$. We may also define
\ea{
\etvc{x} &=\sup_{\sy \in \mcl{R}} \left( \etva{x} \right), & \dtvc{p} & = \sup_{\sy \in \mcl{R}} \left( \dtva{p} \right),
}
We refer to these quantities as the \CO definitions (``C'' for ``Correlation'').  

Let us now turn to the definitions in Section~\ref{sec:RMS}, which we will refer to as the \OP definitions (``O'' for ``Operator'').   The commutators $[\hat{\mu}_{\rm{f}},\hat{x}_{\rm{i}}]$ and $[\hat{p}_{\rm{f}},\hat{p}_{\rm{i}}]$ are typically non-zero so the \OP quantities are typically not generalizations of the corresponding classical quantities, as Busch and co-workers have stressed\cite{Busch:2004,Busch:2014b}. The \OP quantities do, however, impose bounds on the \DM and \CO quantities, and this gives them an indirect physical interpretation.  We have
\ea{
 \eva{x}
&= 
\biggl(\Tr\Bigl( \bigl((\hat{\mu}_{\rm{f}} - \bar{x}^{\sy})^2 - \{ (\hat{\mu}_{\rm{f}} - \bar{x}^{\sy}), (\hat{x}_{\rm{i}} - \bar{x}^{\sy})\} + (\hat{x}_{\rm{i}} - \bar{x}^{\sy})^2 \bigr)\bigl(\sy \otimes \ap\bigr) \Bigr) \biggr)^{\frac{1}{2}}
}
from which it follows
\ea{
\ebva{x} &\le \eva{x} \le \ebva{x} + 2 \uca{x}.
\label{eq:dcoBndA}
}
for all $\sy \in \mcl{P}$.
Similarly 
\ea{
\dbva{x} &\le \dva{x} \le \dbva{x} + 2 \uca{x}.
\\
\etva{x} &\le \eva{x} \le \etva{x} + 2 \uca{x}
\\
\dtva{x} &\le \dva{x} \le \dtva{x} + 2 \uca{x}
}Taking suprema we deduce
\ea{
\ebvc{x} & \le \evc{x} \le \ebvc{x}+2\sup_{\sy \in \mcl{R}} \left(\uca{x}\right),
\\
\dbvc{x} & \le \dvc{x} \le \dbvc{x}+2\sup_{\sy \in \mcl{R}} \left(\uca{x}\right),
\\
\etvc{x} & \le \evc{x} \le \etvc{x}+2\sup_{\sy \in \mcl{R}} \left(\uca{x}\right),
\\
\dtvc{x} & \le \dvc{x} \le \dtvc{x}+2\sup_{\sy \in \mcl{R}} \left(\uca{x}\right).
}
When $\mcl{R} = \mcl{P}$ these inequalities reduce to
\ea{
\ebvb{x} &\le \evb{x}, & \dbvb{x} &\le \dvb{x}, & \etvb{x} &\le \evb{x}, & \dtvb{x} &\le \dvb{x}.
\label{eq:oboundcd}
}
We also have the following constraints on the relative sizes of the \DM and \CO quantities
\ea{
\left| \etva{x}-\ebva{x} \right| &\le 2\uca{x},  &  \left| \dtva{x}-\dbva{x} \right| &\le 2\uca{x}, 
\\
\left| \etvc{x}-\ebvc{x} \right| &\le 2\sup_{\sy \in \mcl{R}} \left(\uca{x}\right),  &  \left| \dtvc{x}-\dbvc{x} \right| &\le 2\sup_{\sy \in \mcl{R}} \left(\uca{x}\right).
\label{eq:CDRcompare}
}
These inequalities mean, among other things, that the \OP quantities are upper bounds on the corresponding \DM and \CO quantities.   

Our discussion raises some important questions.   If $\sup_{\sy \in \mcl{R}} \left(\uca{x}\right)$ is large then the above inequalities are consistent with one of the \OP quantities being large while the corresponding \DM and \CO quantities are both small.  
They also leave open the possibility that, in the case when $\sup_{\sy \in \mcl{R}} \left(\uca{x}\right)$ is large, one of the \DM quantities is large  while the corresponding \CO quantity is small, or \emph{vice versa}. 
One would like to know if these possibilities are actually realized.

Korzekwa \emph{et al}~\cite{Korzekwa:2014}  answer the first of these questions for the case of the state-dependent disturbances.  Consider two non-commuting observables $\hat{R}$, $\hat{S}$ on a finite dimensional Hilbert space.  Suppose that the system is initially in an eigenstate of $\hat{R}$ which is not also an eigenstate of $\hat{S}$, and suppose that one makes a von Neumann measurement of $\hat{R}$.  Then  the \DM and \CO disturbances are both zero  while the \OP disturbance is non-zero.

Busch~\cite{Busch:2014h} gives an example which shows that it is possible for the state-dependent  \DM and \CO errors to be zero while the state-dependent \OP error is non-zero.  Unlike Korzekwa \emph{et al}'s example it is rather artificial (it is a quantum version of the broken-ammeter scenario); however, it is enough to establish the point of principle.  Suppose the system and pointer particles are both spin-1/2 particles, and that the measured observable and pointer observables are the $\hat{\sigma}_z$ operators for their respective particles.  Suppose that the initial system+apparatus state is $|\psi\ra \otimes |\psi\ra$, and that $\hat{U} = \hat{I}$.  Then it is easily seen that the state-dependent \OP error is the ordinary uncertainty of $\hat{\sigma}_z$ in the state $|\psi\ra$, while the state-dependent \DM and \CO errors are zero.

We can use a modification of this example to show that it is possible for the state-dependent \DM quantities to be zero while the state-dependent \CO quantities are non-zero.  Let everything be as in the last paragraph except that system+apparatus are in the maximally mixed state $(1/4) \hat{I} \otimes \hat{I}$.  Then the \DM error is zero while the \CO error is $\sqrt{2}$ (the supremum in Eq.~\eqref{eq:CeDe} being achieved for the maximally entangled state $\hat{\rho}_{AB} = |\Psi\ra\la \Psi|$ with $|\Psi\ra = (1/\sqrt{2}(|+\ra \otimes |+\ra + |-\ra \otimes |-\ra)$, where $|\pm\ra$ are the eigenstates of $\hat{\sigma}_z$).  
To show that the same is true of the \DM and \CO disturbances continue to assume that system+apparatus are  in the maximally mixed state, but take the evolution operator $\hat{U}$ to be $\hat{\sigma}_y \otimes \hat{I}$.  Then the state-dependent \DM disturbance to the observable $\hat{\sigma}_x$ is zero, while the state-dependent \CO disturbance is $\sqrt{2}$ (the supremum in Eq.~\eqref{eq:CdDd} being achieved for the maximally entangled state $\hat{\rho}_{AB} = |\Psi\ra\la \Psi|$ with $|\Psi\ra = (1/\sqrt{2}(|+\ra \otimes |+\ra + |-\ra \otimes |-\ra)$, where $|\pm\ra$ are the eigenstates of $\hat{\sigma}_x$).

Of course, the last three examples (unlike the example of Korzekwa \emph{et al}) are somewhat artificial.  It would be interesting to see if the conclusion continues to hold for more realistic measuring processes.  Also, we have not addressed the  more challenging, and to our mind more interesting question, of what can be said in the state-\emph{independent} case.  This requires further investigation.

The \DM and \CO quantities have a direct, operational interpretation as errors and disturbances.  Smallness of one of these quantities is both necessary and sufficient for the measurement to be accurate or non-disturbing in a well-defined, operational sense.  By contrast the interpretation of the \OP quantities, as we have presented it here, is indirect:  their meaning comes from the fact that they supply various bounds on the \DM and \CO quantities.  Moreover, although smallness of an \OP quantity is sufficient, we have not been able to show that it is necessary for the measurement to be accurate or non-disturbing in a well-defined sense.  
In the case of the state-independent quantities it is possible that, with more work, one could establish necessity as well.  If that were so it would mean, in effect, that the state-independent \OP quantities were fully operational characterizations of the error and disturbance.  

Finally, let us note that there is no reason to assume that our analysis is complete.  The \OP quantities may capture other operationally identifiable features of the measurement which the \DM and \CO quantities both miss.

\section{Response to criticisms}
\label{sec:Comparison}
We now consider BLW's critique of the \OP definitions (also see Busch \emph{et al}~\cite{Busch:2004} and Korzekwa \emph{et al}~\cite{Korzekwa:2014}).    BLW contrast the \OP approach with what they call a distributional approach.  They argue that, although the \OP approach has its uses in certain special cases, the version of the distributional approach based on the Wasserstein 2-deviation is, in general, greatly preferable.  In addressing their criticisms let us begin by observing that the \DM and \CO definitions are themselves   distributional definitions.  Moreover,  although the \OP quantities are not  defined distributionally, their physical interpretation  (as given in Section~\ref{sec:Justification})  depends on the fact that they supply various  bounds on the corresponding \DM and \CO quantities.  So the distinction between operator and distributional approaches is less clear-cut than may initially appear.  The problem is not really to decide between a distributional approach and some other completely different approach; rather it is to decide between  alternative versions of the distributional approach.  As with all such problems the answer is  dependent on the  situation of interest.  In the following it is certainly not our intention to suggest that the \OP definitions are preferable to BLW's definitions  in every situation.  We only argue that there is a physically important class of situations in which the \DM definitions, and consequently the \OP definitions, are preferable.  

BLW accept that the \OP definitions give valid characterizations of the error (respectively disturbance) under conditions where the observables $\hat{x}_{\rm{i}}$, $\hat{\mu}_{\rm{f}}$ (respectively $\hat{p}_{\rm{i}}$, $\hat{p}_{\rm{f}}$) commute.  However, in cases where these observables do not commute they argue that $\hat{x}_{\rm{i}}$, $\hat{\mu}_{\rm{f}}$ (respectively $\hat{p}_{\rm{i}}$, $\hat{p}_{\rm{f}}$)  are not jointly measurable and, consequently, that the interpretation of $\eop{X}$, $\dop{P}$ as error and disturbance operators is ungrounded.  This objection would be justified if we were relying on a naive, purely formal  analogy with the classical expressions $\la (\mu_{\rm{f}} - x_{\rm{i}})^2 \ra$ and $\la (p_{\rm{f}} - p_{\rm{i}})^2\ra$.  However, since we are actually relying on the fact that the \OP quantities bound the \DM and \CO quantities, and since the definitions of the latter are just as operational as BLW's own definitions, there is no problem here.

BLW go on to substantiate their criticisms by giving examples of measurements where the \OP error is zero even though the distribution of measured values is quite different from the initial state distribution.  We will here confine ourselves to their Example 7.  The reader will easily perceive that a suitably modified version of our discussion  applies to their Examples 8, 9 and 10 (also to Example 5 in ref.~\cite{Busch:2004}). The example is of a measurement of position in which the POVM describing the distribution of measured values is the spectral measure of the shifted oscillator Hamiltonian
\ea{
\hat{H}' &= \hat{x} + \alpha\left( \hat{H}-\frac{\hbar\omega}{2}\right), &  \hat{H} &= \frac{1}{2m} \hat{p}^2 +\frac{m\omega^2}{2} \hat{x}^2
}
and in which the initial system state is the ground state of $\hat{H}$.  It is easily verified that $ \eva{x} =0$.  On the other hand it can be seen from Fig.~\ref{fig1} 
\begin{figure}[htb]
\includegraphics[width=4 in]{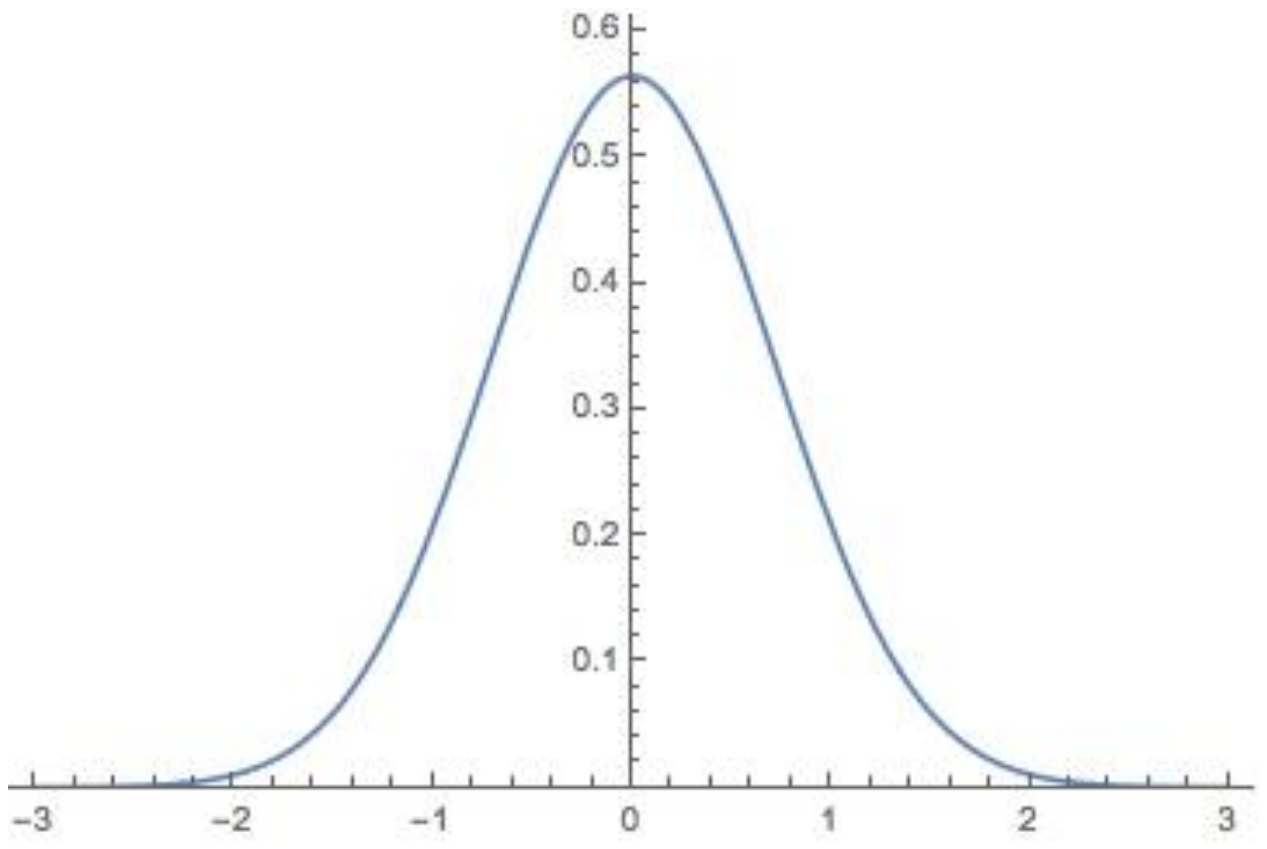}

\vspace{0.2 in}

\includegraphics[width=4 in]{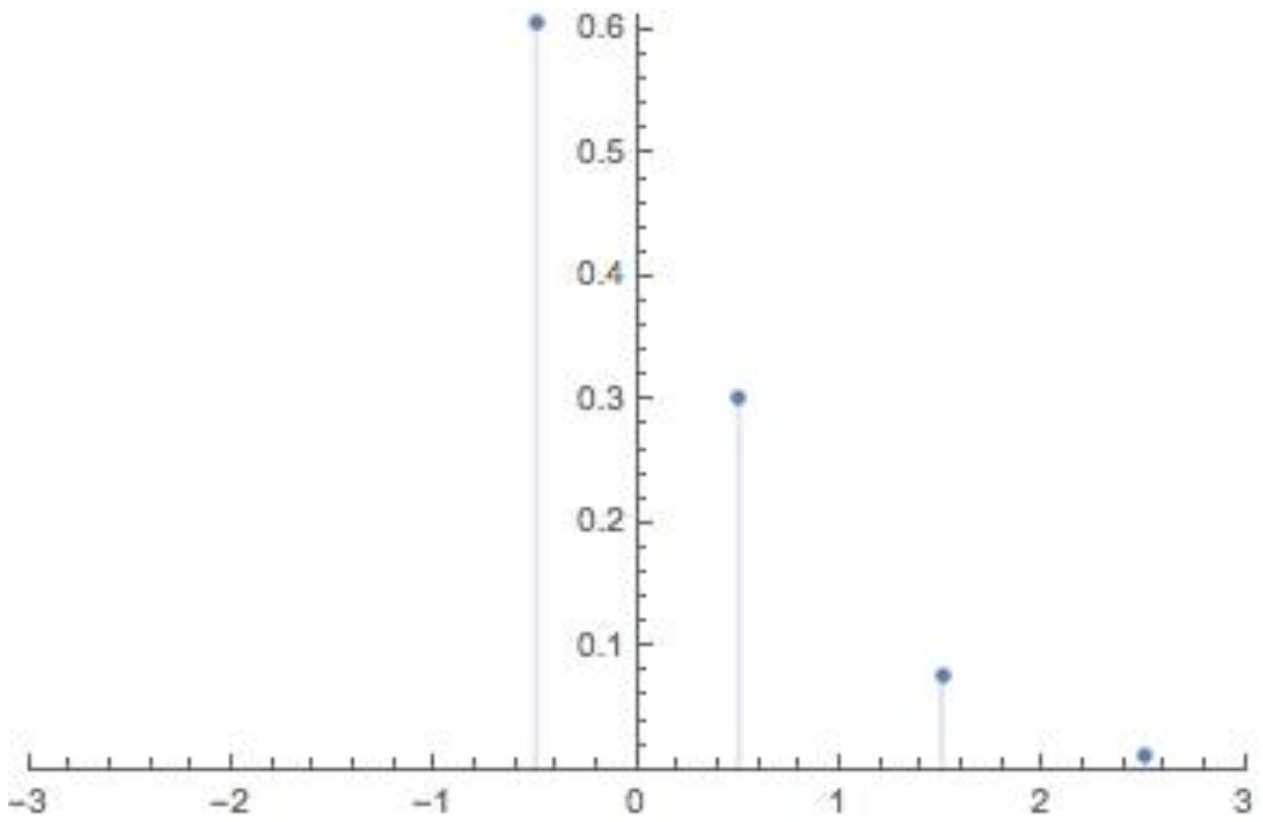}
\caption{Probability distributions for $\hat{x}_{\rm{i}}$ (top graph) and $\hat{\mu}_{\rm{f}}$ (bottom graph) in BLW's Example 7 (units chosen so that $\hbar = m/\alpha = \alpha \omega = 1$).\label{fig1}}
\end{figure}
that the probability distributions for $\hat{x}_{\rm{i}}$ and $\hat{\mu}_{\rm{f}}$ are very different.  In particular the distribution for $\hat{x}_{\rm{i}}$ is continuous whereas that for $\hat{\mu}_{\rm{f}}$ is discrete.  BLW take this to mean that the measurement is highly inaccurate, and that the \OP definition of error is correspondingly misleading.  They are right to the extent that there are  applications---tomography, for example---for which this measurement would be very ill-suited.  However, the purpose of a measurement is not always to accurately reproduce the initial state probability distribution.
That is obviously the case in classical physics.  Consider, for instance, measurements using a digital ammeter.  Here too  the initial state probability distribution  is continuous while the distribution of measured values is discrete.  But this would not usually be seen as a reason for preferring an analogue meter.  Similarly in quantum physics: There are  situations where one is only concerned with certain specific features of the distribution of measured values,  its detailed shape being otherwise unimportant.   Consider, for instance, a state discrimination problem where Bob is promised one of a finite set of Gaussian wave-packets and  has to decide which particular wave-packet Alice has sent.  In this case the crucial requirement is that the quantity $\la (\hat{\mu}_{\rm{f}} - \bar{x})^2 \ra$  be as small as possible.  Other considerations, such as the difference between continuous and discrete, are  irrelevant.   Consequently, a measurement like the one described in BLW's Example 7 is very well-suited to Bob's purpose.  The distributions depicted in Fig.~\ref{fig1} are indeed very different.  However, they have exactly the same mean and variance. Consequently, $\la (\hat{\mu}_{\rm{f}} - \bar{x})^2 \ra$ is not enlarged at all as compared to the initial state variance.  This is what Bob requires.  It is also (see Inequalities~\eqref{eq:oboundcd}) one of the pieces of information conveyed by the statement that $ \eva{x} =0$, which is not misleading at all---\emph{provided} it is correctly understood.  By contrast, the Wasserstein 2-deviation would be decidedly misleading if applied to this situation, as it would cause one to prefer, to the measurement described, one for which the second distribution was a smeared out version of the first---even though this would clearly be worse for Bob's particular purposes.    

Similarly with the disturbance:  in a situation where one is  interested in  the deviation from the initial state mean, but not in any other feature of the probability distribution, then the \DM definition, and consequently the \OP definition of disturbance will be more useful than the one based on the Wasserstein 2-deviation.

It is seldom, if ever, the case, that a single figure of merit captures every potentially relevant feature of a  piece of technology.  Suppose one is buying a car.   If one wants a vehicle that can drive very fast round a carefully prepared track one will choose one figure of merit; if, on the other hand, one wants a vehicle suitable for conveying a family of six to the beach one will choose another, quite different figure of merit.  Similarly with quantum measurements. 

In their examples 7--10 BLW criticize the \OP definitions on the grounds that the \OP error can be zero in situations where the initial state and final pointer  distributions are very different.  In examples 4 and 6 of Busch \emph{et al}\cite{Busch:2004} and example 3 of Busch~\cite{Busch:2014h} the authors make the opposite point,  that the \OP error can be large in situations where the initial state and final pointer distributions are identical; a fact which they regard as an evident defect of the operator approach.  Their argument is based on the principle, that a perfectly accurate measurement is one which perfectly reproduces the initial state probability distribution.  To see that the principle is not generally valid consider the following scenario:
\begin{quotation}
Alice lives in a city where $50\%$ of the population are infected with HIV.  She is worried that she may have it, so she goes to her doctor Bob to be tested.  Bob pulls a coin out of his pocket and tosses it.  He then puts on a grave face and says ``I am sorry, I have bad news for you.''  Alice is outraged, on the grounds that this isn't a proper test.  Bob, however, insists that it is a proper test.  After all, it has the same probability distribution.  What more can she want?
\end{quotation}
This is  a classical example.  One can easily construct a quantum example.  Suppose, for instance, that Alice and Bob are two students who want to perform a test of the Bell inequalities.  Unfortunately they cannot afford state-of-the-art photon counters so they decide that Alice will toss a fair coin at her station, and Bob will independently toss another fair coin at his.  On the principle adopted in examples 4 and 6 of ref.~\cite{Busch:2004} and example 3 of ref.~\cite{Busch:2014h} these are perfectly accurate measurements.  But they will, of course, fail to reveal any  correlations between the two particles.

 Outside of the three examples under discussion Busch and his co-workers adopt a state-independent version of the principle, according to which a measurement is perfectly accurate if it perfectly reproduces the initial state distribution \emph{for every  initial state}.  The phrase in italics makes a crucial difference, as can be seen from the following modified version of the doctor scenario (originally suggested by Poulin~\cite{Poulin:2014})
\begin{quotation}
Alice takes 10 cities, in each of which the incidence of HIV is different.  She then takes a sample of 100 people from each of these cities and presents them to Bob for testing; without, however, telling Bob which patient comes from which city.  It turns out the proportion of positive test results for each city coincides with the actual proportion of HIV infected people in that city.  Alice concludes that, whatever it is that Bob is doing, it probably deserves to be considered a test.
\end{quotation}
Similarly with the state-independent version of the Busch \emph{et al} principle:  If a measurement reproduces the initial state distribution for \emph{every} choice of state then it is very plausible to argue that it is, in some sense, highly accurate.  Calculation confirms that impression.  In particular, it is easily seen that a measurement for which the state-independent Wasserstein 2-error is zero will successfully reveal the correlations in a Bell experiment.

However, in the examples under discussion Busch and his co-workers adopt a state-\emph{dependent} version of the principle.  Like the state-dependent version of the operator approach  this version of the principle can easily lead to unreasonable conclusions (\emph{c.f.} the broken-ammeter scenario in Section~\ref{sec:RMS}).  To show that their objection is not valid we will focus on example 3 in ref.~\cite{Busch:2014h}.  The extension to the other two examples will, we hope, be apparent.  We have already discussed this example at the end of Section~\ref{sec:Justification} (specializing to the case of a spin measurement).  As we noted there the \OP error is non-zero if the initial system state is not an eigenstate of $\hat{\sigma}_z$.  On the other hand, the fact that the initial system and apparatus states are the same, and the fact that the system and apparatus do not interact,  means that the distribution of measured values is identical to the initial state probability distribution of the measured observable.  Busch takes this to imply that the measurement is perfectly accurate.  The fact that the example is a quantum version of the first doctor scenario may make one suspicious of this conclusion.  To see that the suspicion is justified observe that if the measurement really were perfectly accurate  the result of a second, von Neumann measurement of the system observable should be perfectly correlated with the result of the first. It is easily seen that that is not the case.  Indeed, one actually finds
\ea{
\sqrt{\la \Psi | (\hat{\mu}_{2} -\hat{\mu}_{1})^2 |\Psi \ra} &= \Delta^{\hat{\rho}}_{\rm{e}} \sigma_z, 
}
where $|\Psi\ra$  is the Schr{\"{o}}dinger picture state of the system and two apparatuses after the second measurement is completed,  $\hat{\mu}_1$, $\hat{\mu}_2$ are the two pointer observables, and $\Delta^{\hat{\rho}}_{\rm{e}} \sigma_z$ is the state-dependent \OP error.

It is (to say the least) questionable whether the process just described counts as a  measurement at all.  Yet not only the state-dependent Wasserstein 2-error, but also the state-dependent \DM and \CO errors are zero.  That is not a weakness of the definitions:  In all three cases the fact that the error is zero is a well-defined operational statement which happens to be true---just as Bob's statement, in the broken-ammeter scenario, happens to be true.    It does, however, illustrate the limitations of state-dependent definitions.  We argued in Section~\ref{sec:RMS} that state-dependent definitions have their uses.  However, they need to be used with caution.  In particular, a state-dependent error is not a figure of merit:  its smallness does not, by itself, mean that a measurement is in any sense ``good''.

At this point we ought to stress that, although their arguments are, as it seems to us, invalid, the point that Busch and his co-workers are trying to establish---that there are measurements which are highly accurate as judged by any reasonable operational criterion but for which the \OP error is large---could  be right.  In Section~\ref{sec:Justification} we showed that smallness of the \DM or \CO quantities is both necessary and sufficient for a measurement to be accurate or non-disturbing in a well-defined, operational sense.  But in the case of the state-independent \OP quantities we only established sufficiency.  In the state-\emph{dependent} case Korzekwa \emph{et al}~\cite{Korzekwa:2014} have shown that there are processes which are completely non-disturbing as judged by any reasonable operational criterion, but for which the state-dependent \OP disturbance is non-zero (see the discussion at the end of  Section~\ref{sec:Justification}). However, it remains an open question whether the same is true of the state-dependent \OP error.  The more challenging, and to our mind more important question, of what can be said regarding the state-\emph{independent} \OP errors and disturbances, also remains open.

\section{Conclusion}
\label{sec:conclusion}
In the Introduction we argued that we need to develop a unified theory of measurement, in which  classical measurements are seen as a limiting case of  quantum measurements, rather in the way that  Newtonian kinematics is a limiting case of relativistic kinematics.  In particular, we need an overarching quantum mechanical concept of measurement accuracy which effectively reduces to the classical one in special cases, such as measurements with a meter rule.   We argued that, contrary to initial appearances, the Bell-KS theorem is not a major obstacle.

We made a start on the problem by showing that there are at least two ways to reformulate the classical definitions of error and disturbance in a way which does not involve a comparison with pre-existing values.  The reformulated definitions have natural quantum generalizations which we called the \DM and \CO definitions.  The \DM and \CO definitions are examples of quantum definitions which reduce to the classical concept in special cases.  They also bound the \OP quantities introduced in  refs.~\cite{Appleby:1998b,Ozawa:2003,Ozawa:2003a}.  They thereby give physical meaning to the \OP quantities.

We then turned to BLW's criticisms of the \OP definitions.  We argued that one should not expect there to be a single, canonical way of quantifying the concepts of measurement accuracy and disturbance.  The answer to the question ``what is the most appropriate quantitative definition?'' is always relative to the physical problem of interest.  We specified a class of problems for which the \DM definitions, and consequently the \OP definitions, are more appropriate than BLW's definitions based on the Wasserstein 2-deviation.  

Our analysis raises a number of questions which it might be interesting to investigate. Firstly, in the state-independent case, one would like to know (1)  whether smallness of the \OP quantities is  necessary and sufficient for the corresponding \DM and \CO quantities both to be small and (2) whether smallness of the \DM quantities is necessary and sufficient for the corresponding \CO quantities to be small.  Secondly, we have seen that there are physical problems for which the \DM quantities are better-suited than the ones based on the Wasserstein 2-deviation.  One would like to know if the same is true of the \CO quantities.   Thirdly it would be interesting to see if the \OP quantities capture any other operationally well-defined feature of the measurement, additional to those captured by the \DM and \CO quantities.  Finally, it would be interesting to see if one can prove error-disturbance and error-error relations expressed  in  terms of the \DM and \CO quantities.

\section*{Acknowledgements}
We are grateful to Paul Busch, Pekka Lahti, Masanao Ozawa and David Poulin for illuminating discussions.   This work was supported by the ARC via EQuS project number CE11001013.

\appendix

\section{Proof of Inequalities~\eqref{eq:MyErrDisIdeal}, \eqref{eq:MyErrErrIdeal}, \eqref{eq:MyErrDisReal}, \eqref{eq:MyErrErrReal}}
\label{sec:Technical}
In ref.~\cite{Appleby:1998b} we gave a proof of Inequalities~\eqref{eq:MyErrDisIdeal}, \eqref{eq:MyErrErrIdeal}, \eqref{eq:MyErrDisReal}, \eqref{eq:MyErrErrReal}  which skated over certain questions of domain and differentiability.   Our proof has since been criticized for its lack of rigour.  We here take the opportunity to fill in the missing details.  We incidentally strengthen the statement of \eqref{eq:MyErrDisReal}, \eqref{eq:MyErrErrReal} slightly.

The problem we face is that the operators $\hat{x}$, $\hat{p}$ are not defined on the whole Hilbert space.  Specifically, $|\psi\ra$ is in the domain of $\hat{x}$ if and only if the function $x^2 |\la x | \psi\ra|^2$ is integrable. The domain of $\hat{p}$ is defined similarly.  Expressed more intuitively:  $|\psi\ra$ is  in the domain of $\hat{x}$ if and only if $\la x^2 \ra < \infty$.  We take the view that  states  for which that is not true are never realized in a real, Earthbound laboratory experiment.  To put it more succinctly:  They are unphysical\footnote{Perhaps it will be speculated that such states may exist in nature.  We do not particularly  deny that possibility.  We only claim that they are irrelevant to the considerations of this paper.}.

One's first thought may be that one can define the domain of physical states to be the set of all $|\psi\ra$ such that $\la x |\psi\ra$ (respectively $\la p |\psi\ra$) is zero for all $x$ such that $|x| > B_{\rm{X}}$ (respectively, all $p$ such that $|p| > B_{\rm{P}}$), for suitably large positive constants $B_{\rm{X}}$, $B_{\rm{P}}$.  However, the set of such states is empty (because, if $\la x |\psi\ra$ is zero outside the interval $[-B_{\rm{X}}, B_{\rm{X}}]$, then its Fourier transform is analytic, by Schwartz's extension of the Paley-Wiener theorem [see, for example, Treves~\cite{Treves:1967}]).   Nevertheless, although the theory forces at least one of the wave-functions $\la x |\psi\ra$, $\la p |\psi\ra$ to have an infinite tail, nothing observable under ordinary laboratory conditions can depend on it.  It is not possible that a currently performable laboratory experiment can give rise to a state in which there is significant probability of the momentum being greater than $10^{(10^{10})}\rm{kg m s^{-1}}$, and even if it was possible one would not use non-relativistic quantum mechanics to describe it.  Nor is it possible to produce states for which there is significant probability of $|x|$ being greater than $10^{(10^{10})} \rm{m}$.  We need  to give quantitative expression to this point, that the infinite tails are physically irrelevant.  We accordingly take the view (inspired by the rigged Hilbert space formulation of quantum mechanics~\cite{Gelfand:1964a,Bohm:1978,Bohm:1989}) that the set of physical pure states $\mcl{P}_{0}$ consists of those states  $|\psi\ra$ for which the position space wave function $\la x |\hat{\rho} | y\ra$ is (a) $C^{\infty}$ and (b) rapidly decreasing at infinity in the sense that
\ea{
\sup_{x \in \fd{R}} \left( |1+|x|^2)^n \left| \frac{\partial^{m}}{\partial x^m} \la x |\psi \ra\right| \right) < \infty
}
for every pair of non-negative integers $n$, $m$ (in other words, $\la x |\psi\ra$ is a test function for the space of tempered distributions~\cite{Treves:1967}).  Note that this is equivalent to requiring that the momentum space wave function is $C^{\infty}$ and rapidly decreasing at infinity.   Note also that $\mcl{P}_{0}$ is in the domain of every monomial in $\hat{x}$ and $\hat{p}$.    

At first sight this definition may appear arbitrary.  The reader may allow that it is reasonable to impose some restriction on the behaviour at infinity, but wonder why we make this particular choice. Indeed, the requirement is much stronger than we need for our present purposes.   We make the definition nonetheless because actually it is not arbitrary, as can be seen by using  a von Neumann lattice\cite{Neumann:1932,Perelomov:1971,Bargmann:1971}.  For some appropriate scale-length $\lambda$,  let $|n,m\ra$ be the coherent state with wave function
\ea{
\la x | n,m\ra &= \left(\frac{1}{\pi \lambda^2}\right)^{\frac{1}{4}} e^{-\frac{1}{2\lambda^2} (x-n \lambda )^2 + \frac{2\pi m i}{\lambda}x }.
}
Then the set $\{|n,m\ra \colon n,m\in \fd{Z}\}$ with one point removed is a basis.  Choose some suitably enormous integer $N$ and let  $\hat{P}$ be the projector onto the finite dimensional subspace spanned by the set $\{|n,m\ra \colon -N \le n,m \le N\}$.  Then for any state $|\psi\ra$ that is relevant to  a real laboratory experiment the quantity $\| (1-\hat{P}) |\psi\ra \|)$ will be  negligible.  Consequently, predictions obtained using the state $|\psi\ra$ will be experimentally indistinguishable from ones obtained using the state
\ea{
 |\psi_r\ra=\frac{1}{\| \hat{P} |\psi\ra \|} \hat{P} |\psi\ra.
}
Without loss of predictive power we may therefore replace $|\psi\ra$ with $|\psi_r\ra$.  
The fact that $|\psi_r\ra$ is a finite linear combination of coherent states means  that it belongs to $\mcl{P}_0$.    Of course, $\mcl{P}_0$ also includes states like $|n,m\ra$ 
with $n$, $m$ both much larger than $N$, which are certainly not relevant to  ordinary, Earthbound laboratory experiments (being  localized outside the cosmic event horizon).  The point is only that every pure state which is experimentally relevant is empirically indistinguishable from  a state in $\mcl{P}_0$.

Finally we need to define $\mcl{P}$, the set of physical density matrices.  For each non-negative integer $m$ and real $\beta$ define norms
\ea{
N_{m,\beta} (|\psi\ra) = \sup_{x\in \fd{R}} \left( (1+|x|^2)^{\beta} \left| \frac{\partial^m}{\partial x^m}  \la x | \psi \ra \right| \right),
\nn
\tilde{N}_{m,\beta} (|\psi\ra) = \sup_{p\in \fd{R}} \left( (1+|p|^2)^{\beta} \left| \frac{\partial^m}{\partial p^m}  \la p | \psi \ra \right| \right).
}
Now let 
\ea{
\hat{\rho} &= \sum_n \xi_n |n\ra \la n| 
\label{eq:genDens}
} 
be an arbitrary density matrix with eigenvectors $|n\ra$.  We define   $\mcl{P}$ to consist of those $\hat{\rho}$ for which $|n\ra \in \mcl{P}_0$ for all $n$ and for which
\ea{
\sup_n \left( N_{m,\beta}(|n\ra)\right) & < \infty &  \sup_n \left( \tilde{N}_{m,\beta}(|n\ra)\right) &< \infty
}
for all $m$, $\beta$.  Note that it is enough to demand that one set of suprema is finite, since the finiteness of the other is then automatic.  Note also that in the case when the spectrum of $\hat{\rho}$ has degeneracies the finiteness of the suprema does not depend on the particular choice of eigenvectors.  Finally, let us remark that for the technical purposes of this appendix it would be enough to require that the suprema are finite for the particular case $m=0$, $\beta = 3/2$. 

This definition is justified by the fact that no experimentally relevant density matrix can be distinguished empirically from  a state in $\mcl{P}$.  Indeed, let $\hat{\rho}$ be  an experimentally relevant density matrix, and let $\hat{P}$ be the projector onto the first $N$ eigenstates.  Choose $N$ so that $\Tr((1-\hat{P}) \hat{\rho})$ is smaller than some suitably tiny number.  No practicable experiment can distinguish between $\hat{\rho}$ and $(1/(\Tr(\hat{P}\hat{\rho}))\hat{P}\hat{\rho}\hat{P}$.  By the argument we used  to justify the definition of $\mcl{P}_0$, the state $(1/(\Tr(\hat{P}\hat{\rho}))\hat{P}\hat{\rho}\hat{P}$ is in turn empirically indistinguishable from one of the form
\ea{
\hat{\rho}_0 &= \sum_n \xi_n |n\ra \la n | 
}
where the states $|n\ra$ are a finite orthonormal set  in $\mcl{P}_0$.  The fact that the set is finite means $\hat{\rho}_0\in \mcl{P}$.

The proof of the main theorem depends on three technical lemmas.  Define displacement operators
\ea{
\hat{D}_{xp} &= e^{i(p\hat{x} - x \hat{p})},
} 
and, for each $|\psi\ra \in \mcl{P}_0$, let
\ea{
|\psi_{xp}\ra &= \hat{D}\vpu{\dagger}_{xp} |\psi\ra.
}
It is easily seen that $|\psi_{xp}\ra  \in \mcl{P}_0$ for all $x$, $p$. 
\begin{Lemma}
\label{lm:Difflem}
For all $|\psi\ra \in\mcl{P}_0$ the function $|\psi_{xp}\ra$ is differentiable in the sense that\ea{
\left\| \frac{1}{\epsilon} \left(|\psi_{x+\epsilon,p}\ra -|\psi_{x,p}\ra\right)   + \frac{i}{\hbar}\left( \hat{p}-\frac{1}{2} p\right)|\psi_{x,p}\ra \right\| & \le \epsilon B (4+p^2) \tilde{N}_{0,\frac{3}{2}}(|\psi\ra),
\\
\left\| \frac{1}{\epsilon} \left(|\psi_{x,p+\epsilon}\ra -|\psi_{x,p}\ra\right)   - \frac{i}{\hbar}\left( \hat{x}-\frac{1}{2} x\right)|\psi_{x,p}\ra \right\| & \le \epsilon B(4+x^2) N_{0,\frac{3}{2}}(|\psi\ra)
}
for all $x$, $p$, all $\epsilon >0$, where $B$ is a fixed positive constant independent of  $|\psi\ra$.  
\end{Lemma}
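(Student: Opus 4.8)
The plan is to reduce the operator difference quotient to a \emph{pointwise} scalar estimate by passing to the momentum representation, in which both the $x$-translation and the operator $\hat p$ act by multiplication. First I would use the Weyl composition law for the displacement operators, $\hat D_{\epsilon,0}\hat D_{x,p}=e^{-i\epsilon p/(2\hbar)}\hat D_{x+\epsilon,p}$ (a one-line consequence of the Baker--Campbell--Hausdorff formula, since the commutator of the two generators is a central c-number), to write
\be
|\psi_{x+\epsilon,p}\ra = e^{i\epsilon p/(2\hbar)}\,e^{-i\epsilon\hat p/\hbar}\,|\psi_{x,p}\ra .
\ee
Since $|\psi\ra\in\mcl{P}_0$ and the displacement operators preserve $\mcl{P}_0$, the vector $|\psi_{x,p}\ra$ lies in the domain of $\hat p$, so every term in the inequality is well defined. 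Writing $\tilde\phi(p')=\la p'|\psi_{x,p}\ra$ and using that $e^{-i\epsilon\hat p/\hbar}$ acts in the momentum representation as multiplication by $e^{-i\epsilon p'/\hbar}$ while $\hat p$ acts as multiplication by $p'$, the entire left-hand side becomes the multiplication operator $\tilde\phi(p')\,M(p')$ with
\be
M(p')=\frac1\epsilon\Bigl(e^{i\epsilon(p/2-p')/\hbar}-1\Bigr)+\frac{i}{\hbar}\Bigl(p'-\tfrac12 p\Bigr).
\ee

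Next I would exploit the fact that $M$ is exactly a first-order Taylor remainder. Setting $\theta=\epsilon(p/2-p')/\hbar$, the two displayed terms collapse to $M(p')=\tfrac1\epsilon\bigl(e^{i\theta}-1-i\theta\bigr)$, so the elementary uniform bound $|e^{i\theta}-1-i\theta|\le\tfrac12\theta^2$ gives the pointwise estimate $|M(p')|\le \tfrac{\epsilon}{2\hbar^2}(p/2-p')^2$. By Plancherel the Hilbert-space norm in the lemma equals the $L^2$ norm of $\tilde\phi(p')M(p')$, and since the phase-space displacement only modulates the momentum wave function by a phase and shifts it, $|\tilde\phi(p')|=|\tilde\psi(p'-p)|$. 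After the substitution $q=p'-p$ this reduces the whole problem to showing
\be
\Bigl(\tfrac{\epsilon}{2\hbar^2}\Bigr)^2\int \bigl(q+\tfrac12 p\bigr)^4\,|\tilde\psi(q)|^2\,dq \;\le\; \epsilon^2 B^2 (4+p^2)^2\,\bigl(\tilde N_{0,\frac32}(|\psi\ra)\bigr)^2 .
\ee
Note that the $x$-dependence has dropped out entirely, as the statement requires.

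The remaining step is to control this weighted second-moment integral by the single Schwartz norm $\tilde N_{0,3/2}$. I would insert $|\tilde\psi(q)|\le \tilde N_{0,3/2}(|\psi\ra)(1+q^2)^{-3/2}$ and split $(q+\tfrac12 p)^4\le 8q^4+\tfrac12 p^4$; the two resulting integrals $\int q^4(1+q^2)^{-3}\,dq$ and $\int(1+q^2)^{-3}\,dq$ are finite because the integrands decay like $q^{-2}$. This yields a bound of the form $C_1+C_2p^4$, and since $(4+p^2)^2\ge\max(16,p^4)$ one has $C_1+C_2p^4\le C_3(4+p^2)^2$; taking square roots identifies $B$, which depends only on $\hbar$ and the two numerical integrals and is therefore independent of $|\psi\ra$, $x$, $p$ and $\epsilon$. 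The second inequality follows by the identical argument under the interchange $x\leftrightarrow p$, carried out in the position representation (where the $p$-translation and $\hat x$ become multiplication operators), producing $N_{0,3/2}$ and the factor $(4+x^2)$ in place of their momentum-space counterparts.

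The only real obstacle is the unboundedness of $\hat p$ and $\hat x$, which is precisely what makes a naive differentiation of $|\psi_{x,p}\ra$ delicate. The device that removes it is the choice of representation: working in the variable that diagonalises the generator of the relevant translation turns the difference quotient into a bounded multiplication operator and converts the whole estimate into the scalar remainder bound above, while the rapid decay encoded in $\mcl{P}_0$ guarantees that the weighted moment integral converges. Everything else is routine bookkeeping.
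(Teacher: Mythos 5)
Your proof is correct and follows essentially the same route as the paper's: both pass to the momentum representation, where (via the Weyl composition law) the difference quotient plus generator term becomes multiplication by a first-order Taylor remainder $\frac{1}{\epsilon}\bigl(e^{i\theta}-1-i\theta\bigr)$ with $\theta=\epsilon(p/2-p')/\hbar$, bounded pointwise by $\tfrac{1}{2}\epsilon\theta^2/\epsilon^2$ (the paper's constant $C$ is precisely $\sup_u |e^{iu}-1-iu|^2/u^4$), and then integrated against the decay bound $|\la p'-p|\psi\ra|\le \tilde{N}_{0,\frac{3}{2}}(|\psi\ra)(1+(p'-p)^2)^{-3/2}$ to produce the $(4+p^2)$ factor. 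Your write-up merely makes explicit the steps the paper compresses into ``it is easily seen,'' and your treatment of the second inequality by the $x\leftrightarrow p$ interchange in the position representation matches the paper's remark that it ``is proved in the same way.''
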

\begin{proof}
It is easily seen that
\ea{
&\left| \frac{1}{\epsilon} \left(\la p' |\psi_{x+\epsilon,p}\ra -\la p'|\psi_{x,p}\ra\right)   + \frac{i}{\hbar}\left( p' - \frac{1}{2} p\right)\la p' |\psi_{x,p}\ra\right|^2 
\nn
&\hspace{0.5 in } \le \frac{ \epsilon^2 C}{16\hbar^4}  (2p'-p)^4 \left| \la p'-p |\psi \ra\right|^2
}
where
\ea{
C &= \sup_{u \in \fd{R}} \left( \frac{(\cos u-1)^2}{u^4} + \frac{(\sin u -u)^2}{u^4} \right).
}
Hence
\ea{
&\left\| \frac{1}{\epsilon} \left(|\psi_{x+\epsilon,p}\ra -|\psi_{x,p}\ra\right)   + \frac{i}{\hbar}\left( \hat{p}-\frac{1}{2} p\right)|\psi_{x,p}\ra \right\|^2
\nn
&\hspace{1 in} 
\le \frac{3 \pi \epsilon^2 C}{128 \hbar^2}(4+p^2)^2 \left(\tilde{N}_{0,\frac{3}{2}}(|\psi\ra) \right)^2.
}
The second inequality is proved in the same way.
\end{proof}
\begin{Lemma}
\label{cor:uniformCont}
For all $|\psi\ra \in \mcl{P}_0$, $|\psi_{xp}\ra$, $\hat{x} |\psi_{xp}\ra$, $\hat{p} |\psi_{xp}\ra$ are uniformly continuous  on every compact subset of $\fd{R}^2$.  Specifically, let $\mcl{C}$ be such a set.  Then
\ea{
\bigl\| |\psi_{x_1,p_1} \ra - |\psi_{x_2,p_2} \ra \bigr\|
& \le \epsilon_x B_1 \tilde{N}_{0,\frac{3}{2}}(|\psi\ra) + \epsilon_p B_1 N_{0,\frac{3}{2}} (|\psi\ra)
\label{eq:unContA}
\\
\bigl\| \hat{x} |\psi_{x_1,p_1} \ra - \hat{x} |\psi_{x_2,p_2} \ra \bigr\|
&\le
\epsilon_x B_2 \left(1+  \tilde{N}_{0,\frac{3}{2}}( |\psi\ra)+\tilde{N}_{0,\frac{3}{2}}(\hat{x} |\psi\ra) \right)
\nn
&\hspace{1 in} 
+ \epsilon_p B_2 \left(N_{0,\frac{3}{2}}(|\psi\ra)+N_{0,\frac{3}{2}}(\hat{x} |\psi\ra) \right)
\\
\bigl\| \hat{p} |\psi_{x_1,p_1} \ra - \hat{p} |\psi_{x_2,p_2} \ra \bigr\|
&\le
\epsilon_x B_3 \left( \tilde{N}_{0,\frac{3}{2}}( |\psi\ra)+\tilde{N}_{0,\frac{3}{2}}(\hat{p} |\psi\ra) \right)
\nn
&\hspace{1 in} 
+ \epsilon_p B_3 \left(1+N_{0,\frac{3}{2}}(|\psi\ra)+N_{0,\frac{3}{2}}(\hat{p} |\psi\ra) \right)
}
for all $(x_1,p_1)$, $(x_2,p_2)\in \mcl{C}$, where $\epsilon_x = |x_1-x_2|$, $\epsilon_p = |p_1-p_2|$, and where  the $B_j$ are  positive constants which depend on $\mcl{C}$ but not on $|\psi\ra$.
\end{Lemma}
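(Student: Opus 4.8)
The plan is to promote the pointwise difference-quotient estimates of Lemma~\ref{lm:Difflem} into genuine derivatives, and then to obtain the finite increments claimed in Lemma~\ref{cor:uniformCont} by integrating those derivatives along a two-segment path. First I would note that the right-hand sides in Lemma~\ref{lm:Difflem} are $O(\epsilon)$, so letting $\epsilon\to 0$ shows that, for fixed $|\psi\rangle\in\mcl{P}_0$, the map $(x,p)\mapsto|\psi_{xp}\rangle$ is differentiable with
\[
\partial_x|\psi_{xp}\rangle = -\frac{i}{\hbar}\Bigl(\hat{p}-\tfrac{1}{2} p\Bigr)|\psi_{xp}\rangle, \qquad \partial_p|\psi_{xp}\rangle = \frac{i}{\hbar}\Bigl(\hat{x}-\tfrac{1}{2} x\Bigr)|\psi_{xp}\rangle .
\]
The covariance relations $\hat{D}_{xp}^{\dagger}\hat{x}\hat{D}_{xp}=\hat{x}+x$ and $\hat{D}_{xp}^{\dagger}\hat{p}\hat{D}_{xp}=\hat{p}+p$ let me rewrite each derivative as $\hat{D}_{xp}$ applied to a \emph{fixed} vector, e.g.\ $\partial_x|\psi_{xp}\rangle = -\frac{i}{\hbar}\hat{D}_{xp}\bigl(\hat{p}+\tfrac{1}{2} p\bigr)|\psi\rangle$; since $(x,p)\mapsto\hat{D}_{xp}v$ is norm-continuous for fixed $v$ (strong continuity of the Weyl system), each derivative is norm-continuous, so the fundamental theorem of calculus for Banach-space-valued functions is legitimately available along each coordinate segment.

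The second step is to bound the derivative norms uniformly on $\mcl{C}$. By unitarity $\||\psi_{xp}\rangle\|=\||\psi\rangle\|=1$, and the covariance relation gives $\|\hat{p}|\psi_{xp}\rangle\| = \|(\hat{p}+p)|\psi\rangle\|\le \|\hat{p}|\psi\rangle\|+|p|$. The weighted-sup definitions control the remaining $L^2$ quantities: from $|\langle p|\psi\rangle|\le \tilde{N}_{0,\frac{3}{2}}(|\psi\rangle)(1+p^2)^{-3/2}$ together with convergence of $\int(1+p^2)^{-3}\,dp$ and $\int p^2(1+p^2)^{-3}\,dp$ one gets $\||\psi\rangle\|\le c\,\tilde{N}_{0,\frac{3}{2}}(|\psi\rangle)$ and $\|\hat{p}|\psi\rangle\|\le c\,\tilde{N}_{0,\frac{3}{2}}(|\psi\rangle)$, and symmetrically $\|\hat{x}|\psi\rangle\|\le c\,N_{0,\frac{3}{2}}(|\psi\rangle)$. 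Since $|x|,|p|$ are bounded by some $R$ on the compact set $\mcl{C}$, this yields $\sup_{\mcl{C}}\|\partial_x|\psi_{xp}\rangle\|\le c_x\,\tilde{N}_{0,\frac{3}{2}}(|\psi\rangle)$ and $\sup_{\mcl{C}}\|\partial_p|\psi_{xp}\rangle\|\le c_p\, N_{0,\frac{3}{2}}(|\psi\rangle)$, with $c_x,c_p$ depending only on $\mcl{C}$. Writing $|\psi_{x_1,p_1}\rangle-|\psi_{x_2,p_2}\rangle$ as the integral of $\partial_x$ along $[x_2,x_1]$ at fixed $p_1$ plus the integral of $\partial_p$ along $[p_2,p_1]$ at fixed $x_2$, and applying these bounds, gives \eqref{eq:unContA} with $B_1=\max(c_x,c_p)$.

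For the remaining two estimates I would use the same covariance relations to split off the unbounded part. Since $\hat{x}\hat{D}_{xp}=\hat{D}_{xp}(\hat{x}+x)$, I have the identity $\hat{x}|\psi_{xp}\rangle = (\hat{x}|\psi\rangle)_{xp} + x\,|\psi_{xp}\rangle$, where $(\hat{x}|\psi\rangle)_{xp}=\hat{D}_{xp}(\hat{x}|\psi\rangle)$ and $\hat{x}|\psi\rangle\in\mcl{P}_0$ (the Schwartz space is closed under multiplication by $x$). The difference $\hat{x}|\psi_{x_1p_1}\rangle-\hat{x}|\psi_{x_2p_2}\rangle$ then splits into two pieces: the displaced-state piece is bounded by applying the already-proved \eqref{eq:unContA} to the state $\hat{x}|\psi\rangle$, which produces the $\tilde{N}_{0,\frac{3}{2}}(\hat{x}|\psi\rangle)$ and $N_{0,\frac{3}{2}}(\hat{x}|\psi\rangle)$ terms; the scalar piece is handled by $x_1|\psi_{x_1p_1}\rangle-x_2|\psi_{x_2p_2}\rangle = x_1\bigl(|\psi_{x_1p_1}\rangle-|\psi_{x_2p_2}\rangle\bigr)+(x_1-x_2)|\psi_{x_2p_2}\rangle$, where the first term again invokes \eqref{eq:unContA} (with $|x_1|\le R$) and the second has norm exactly $\epsilon_x\||\psi\rangle\|=\epsilon_x$, which is precisely the source of the bare $1$ multiplying $\epsilon_x$ in the stated bound. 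Absorbing all $\mcl{C}$-dependent constants into a single $B_2$ gives the second inequality; the third follows identically from $\hat{p}|\psi_{xp}\rangle = (\hat{p}|\psi\rangle)_{xp}+p\,|\psi_{xp}\rangle$, with the bare $1$ now attached to $\epsilon_p$.

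The main obstacle—really the only point requiring care—is the rigorous passage from the pointwise, fixed-$(x,p)$ estimates of Lemma~\ref{lm:Difflem} to a finite increment bound: one must confirm that the derivatives exist, are norm-continuous in $(x,p)$ so that the vector-valued fundamental theorem of calculus genuinely applies, and that every constant can be chosen uniformly over the compact set $\mcl{C}$. It is exactly this norm-continuity that makes the covariance rewriting $\partial_x|\psi_{xp}\rangle=-\frac{i}{\hbar}\hat{D}_{xp}\bigl(\hat{p}+\tfrac{1}{2} p\bigr)|\psi\rangle$ indispensable, since strong continuity of the displacement group then delivers it at once; everything else is elementary bookkeeping with the weighted-sup norms.
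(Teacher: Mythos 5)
Your proof is correct and takes essentially the same route as the paper's: the first inequality is obtained from Lemma~\ref{lm:Difflem} together with a uniform bound on the norm of the derivative term (the paper bounds $\bigl\| (\hat{p}-\tfrac{1}{2}p)|\psi_{x,p}\ra \bigr\|$ directly via the weighted norms, you do the equivalent after the covariance rewriting), and the second and third inequalities are reduced to the first via the identity $\hat{x}|\psi_{xp}\ra = |\phi_{xp}\ra + x|\psi_{xp}\ra$ with $|\phi\ra = \hat{x}|\psi\ra$, which is exactly the paper's decomposition. Your explicit fundamental-theorem-of-calculus step (with norm-continuity of the derivative supplied by strong continuity of the displacement group) simply fills in what the paper compresses into ``a straightforward consequence'' of Lemma~\ref{lm:Difflem}.
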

\begin{proof}
The first inequality is a straightforward consequence of Lemma~\ref{lm:Difflem} and the inequalities
\ea{
 \left\| \left(\hat{x} - \frac{1}{2} x \right) |\psi_{x,p}\ra \right\|^2
 &\le
 \left(N_{0,\frac{3}{2}}(|\psi\ra)\right)^2 \frac{\pi}{32} (4+3 x^2),
 \\
\left\| \left(\hat{p} - \frac{1}{2}p \right) |\psi_{x,p}\ra \right\|^2
&\le  \left(\tilde{N}_{0,\frac{3}{2}}(|\psi\ra)\right)^2 \frac{\pi}{32} (4+3p^2).
}
To prove the second inequality let $|\phi\ra = \hat{x} |\psi\ra$.  Then
\ea{
\hat{x} |\psi_{xp}\ra &= |\phi_{xp} \ra + x |\psi_{xp}\ra,
}
implying
\ea{
&\bigl\| \hat{x} |\psi_{x_1,p_1} \ra - \hat{x} |\psi_{x_2,p_2} \ra \bigr\|
\nn
&\hspace{0.5 in} \le \bigl\| |\phi_{x_1,p_1} \ra -|\phi_{x_2,p_2}\ra \bigr\| + x_1 \bigl\| \psi_{x_1,p_1} \ra - |\psi_{x_2,p_2}\ra\bigr\| + |x_1-x_2|.
}
The proof now reduces to an application of the first inequality.  The last inequality is proved in the same way.
\end{proof}
Let the initial apparatus state be
\ea{
\ap &= \sum_{n=1}^{n_{\rm{a}}} \lambda_n |\phi_n \ra \la \phi_n |
\label{eq:AppsigExp}
}
for some set of positive numbers $\lambda_n$ and orthonormal  set $|\phi_n\ra$.  We argue on the same physical grounds adduced in the first few paragraphs of this appendix that the quantities $\Tr\bigl((\sy \otimes \ap)\eop{x}^2\bigr)$, $\Tr\bigl(\sy \otimes \ap) \dop{p}^2\bigl)$ are well-defined for all $\sy \in \mcl{P}$. 
 Finally, for given positive real numbers $l_{\rm{X}}$, $l_{\rm{P}}$, define $\mcl{C}_{l_{\rm{X},} l_{\rm{P}}}$ to be the phase-space box consisting of all $x,p$ such that  $ -\frac{l_{\rm{X}}}{2} \le x \le \frac{l_{\rm{X}}}{2}$, $ -\frac{l_{\rm{P}}}{2} \le x \le \frac{l_{\rm{P}}}{2}$.  
\begin{Lemma}
\label{lm:expctDiff}
Let $\hat{\rho}$ be any element of  $\mcl{P}$, and 
let $\hat{\rho}_{xp}= \hat{D}\vpu{\dagger}_{xp} \hat{\rho} \hat{D}^{\dagger}_{xp}$.  Let $\hat{A}$ be any self-adjoint operator such that $\Tr\bigl(\hat{\rho}_{xp} \otimes \ap)[\hat{p},\hat{A} ] \bigr)$, 
$\Tr\bigl(\hat{\rho}_{xp} \otimes \ap)[\hat{x},\hat{A}] \bigr)$ 
are defined, 
and $\Tr\bigl( (\hat{\rho}_{xp} \otimes \ap) \hat{A}^2 \bigr)$, $\Tr\bigl( (\hat{\rho}_{xp} \otimes \ap)\hat{x} \hat{A}^2 \hat{x} \bigr)$, $\Tr\bigl( (\hat{\rho}_{xp} \otimes \ap)\hat{p} \hat{A}^2 \hat{p} \bigr)$ are both defined and bounded on $ \mcl{C}_{l_{\rm{X}},l_{\rm{P}}}$.
Then $\Tr\bigl( (\hat{\rho}_{xp} \otimes \ap) \hat{A} \bigr)$ is  differentiable on $ \mcl{C}_{l_{\rm{X}},l_{\rm{P}}}$, and
\ea{
\frac{\partial}{\partial x}\left( \Tr\bigl(\hat{\rho}_{xp} \otimes \ap)\hat{A} \right) 
&= \frac{i}{\hbar} \Tr\bigl(\hat{\rho}_{xp} \otimes \ap)[\hat{p},\hat{A} ] \bigr),
\label{eq:traDiffA}
\\
\frac{\partial}{\partial p}\left( \Tr\bigl(\hat{\rho}_{xp} \otimes \ap)\hat{A} \right) 
&= -\frac{i}{\hbar} \Tr\bigl(\hat{\rho}_{xp} \otimes \ap)[\hat{x},\hat{A}] \bigr).
\label{eq:traDiffB}
}
Moreover, the derivatives are uniformly continuous on $ \mcl{C}_{l_{\rm{X}},l_{\rm{P}}}$.
\end{Lemma}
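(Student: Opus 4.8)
The plan is to reduce the statement to a term-by-term differentiation of matrix elements and then to control the interchange of differentiation with the trace. Writing the spectral decompositions $\hat{\rho} = \sum_n \xi_n |n\rangle\langle n|$ (with $|n\rangle \in \mcl{P}_0$) and $\ap = \sum_m \lambda_m |\phi_m\rangle\langle\phi_m|$, and setting $|\Psi_{nm}\rangle = \hat{D}_{xp}|n\rangle \otimes |\phi_m\rangle$, one has $\Tr((\hat{\rho}_{xp}\otimes\ap)\hat{A}) = \sum_{n,m}\xi_n\lambda_m\langle\Psi_{nm}|\hat{A}|\Psi_{nm}\rangle$. Lemma~\ref{lm:Difflem} supplies the strong derivatives $\partial_x|\Psi_{nm}\rangle = -\frac{i}{\hbar}(\hat{p}-\frac{1}{2}p)|\Psi_{nm}\rangle$ and $\partial_p|\Psi_{nm}\rangle = \frac{i}{\hbar}(\hat{x}-\frac{1}{2}x)|\Psi_{nm}\rangle$, the displacement acting only on the system factor. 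Granting for the moment that each scalar function $g_{nm} = \langle\Psi_{nm}|\hat{A}|\Psi_{nm}\rangle$ may be differentiated by the product rule, self-adjointness of $\hat{A}$ and of $\hat{p}\otimes I$ lets one move the operators onto the bra, and the c-number $\frac{1}{2}p$ commutes with $\hat{A}$ and drops out of the commutator, leaving $\partial_x g_{nm} = \frac{i}{\hbar}\langle\Psi_{nm}|[\hat{p},\hat{A}]|\Psi_{nm}\rangle$ and similarly $\partial_p g_{nm} = -\frac{i}{\hbar}\langle\Psi_{nm}|[\hat{x},\hat{A}]|\Psi_{nm}\rangle$. Summing against $\xi_n\lambda_m$ then reproduces Eqs.~\eqref{eq:traDiffA}, \eqref{eq:traDiffB}, so the whole proof rests on (a) legitimising the product rule for each $g_{nm}$ despite $\hat{A}$ being unbounded, and (b) justifying termwise differentiation of the series.

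The main obstacle is (a), and its crux is the norm-continuity of $(x,p) \mapsto \hat{A}|\Psi_{nm}\rangle$. This is exactly what the sandwiched hypotheses furnish: for a pure state $\langle\Psi|\hat{A}^2|\Psi\rangle = \|\hat{A}\Psi\|^2$, $\langle\Psi|\hat{p}\hat{A}^2\hat{p}|\Psi\rangle = \|\hat{A}(\hat{p}\otimes I)\Psi\|^2$ and $\langle\Psi|\hat{x}\hat{A}^2\hat{x}|\Psi\rangle = \|\hat{A}(\hat{x}\otimes I)\Psi\|^2$, so the boundedness of these traces on $\mcl{C}_{l_{\rm{X}},l_{\rm{P}}}$ places every displaced state in the domain of $\hat{A}$ and bounds $\|\hat{A}(\hat{p}\otimes I)\Psi_{nm}\|$ and $\|\hat{A}(\hat{x}\otimes I)\Psi_{nm}\|$ uniformly on the box. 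Combining the strong derivative of Lemma~\ref{lm:Difflem} with the fundamental theorem of calculus in Hilbert space gives $|\Psi_{nm}(x+\delta,p)\rangle - |\Psi_{nm}(x,p)\rangle = -\frac{i}{\hbar}\int_0^\delta(\hat{p}-\frac{1}{2}p)|\Psi_{nm}(x+s,p)\rangle\,ds$; since $\hat{A}$ is self-adjoint, hence closed, the integrand is norm-continuous into its domain, and $\|\hat{A}(\hat{p}-\frac{1}{2}p)\Psi_{nm}\|$ is bounded on the box, $\hat{A}$ may be pulled through the Bochner integral, yielding $\|\hat{A}(|\Psi_{nm}(x+\delta,p)\rangle - |\Psi_{nm}(x,p)\rangle)\| \le C|\delta|$. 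Thus $\hat{A}|\Psi_{nm}\rangle$ is Lipschitz, hence continuous, in each variable (jointly, on combining with Lemma~\ref{cor:uniformCont}). With this in hand the difference quotient of $g_{nm}$ splits as $\langle\tfrac{1}{\epsilon}(\Psi_{nm}^{(\epsilon)}-\Psi_{nm})|\hat{A}\Psi_{nm}^{(\epsilon)}\rangle + \langle\hat{A}\Psi_{nm}|\tfrac{1}{\epsilon}(\Psi_{nm}^{(\epsilon)}-\Psi_{nm})\rangle$ (writing $\Psi_{nm}^{(\epsilon)}=\Psi_{nm}(x+\epsilon,p)$); substituting $\hat{A}\Psi_{nm}^{(\epsilon)} = \hat{A}\Psi_{nm} + \hat{A}(\Psi_{nm}^{(\epsilon)}-\Psi_{nm})$, the cross term is bounded by $\|\tfrac{1}{\epsilon}(\Psi_{nm}^{(\epsilon)}-\Psi_{nm})\|\,\|\hat{A}(\Psi_{nm}^{(\epsilon)}-\Psi_{nm})\| \to 0$ by the continuity just proved, while the remaining pieces converge to $\langle\partial_x\Psi_{nm}|\hat{A}\Psi_{nm}\rangle + \langle\hat{A}\Psi_{nm}|\partial_x\Psi_{nm}\rangle$. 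This is the product rule, equal to $\frac{i}{\hbar}\langle\Psi_{nm}|[\hat{p},\hat{A}]|\Psi_{nm}\rangle$.

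It remains to handle (b). Here I would establish uniform convergence of the differentiated series on the box from the pointwise estimate $|\partial_x g_{nm}| \le \frac{2}{\hbar}\|(\hat{p}\otimes I)\Psi_{nm}\|\,\|\hat{A}\Psi_{nm}\|$ together with the Cauchy--Schwarz bound
\ea{
\sum_{n,m}\xi_n\lambda_m\|(\hat{p}\otimes I)\Psi_{nm}\|\,\|\hat{A}\Psi_{nm}\|
\nn
&\hspace{0.9 in} \le \Big(\Tr((\hat{\rho}_{xp}\otimes\ap)\hat{p}^2)\Big)^{\frac{1}{2}}\Big(\Tr((\hat{\rho}_{xp}\otimes\ap)\hat{A}^2)\Big)^{\frac{1}{2}},
}
whose right-hand side is finite and bounded on $\mcl{C}_{l_{\rm{X}},l_{\rm{P}}}$: the first factor because $\hat{\rho}\in\mcl{P}$ controls all momentum moments uniformly over a compact set of displacements, the second by hypothesis. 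A mean-value bound on the difference quotients combined with this tail estimate produces a summable majorant independent of $\epsilon$, legitimising the interchange of $\partial_x$ with $\sum_{n,m}$ and yielding Eq.~\eqref{eq:traDiffA}; the argument for $\partial_p$ is identical with $\hat{x}$ in place of $\hat{p}$. Finally, each summand of the differentiated series is continuous in $(x,p)$ (by the norm-continuity of $(\hat{p}\otimes I)\Psi_{nm}$ from Lemma~\ref{cor:uniformCont} and of $\hat{A}\Psi_{nm}$ just established), and the same tail bound makes the sum converge uniformly, so the derivatives are continuous; being continuous on the compact box they are uniformly continuous. I expect step (a)---specifically pulling the unbounded $\hat{A}$ through the Bochner integral to obtain norm-continuity of $\hat{A}|\Psi_{nm}\rangle$---to be the genuinely delicate point, since it is there that the three $\hat{A}^2$-type hypotheses are indispensable; once that continuity is secured, both the product rule and the interchange in (b) follow by routine dominated-convergence bookkeeping.
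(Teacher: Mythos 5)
Your proposal is correct in substance, but it reaches the crucial analytic point by a genuinely different route from the paper. The paper never applies $\hat{A}$ to a difference vector at all: using self-adjointness it anchors $\hat{A}$ on the fixed vectors $|\psi_{n,x,p}\otimes\phi_m\rangle$ and $\hat{p}\,|\psi_{n,x,p}\otimes\phi_m\rangle$, so that the displaced-state differences supplied by Lemmas~\ref{lm:Difflem} and~\ref{cor:uniformCont} only ever appear as free kets, and it obtains a fully quantitative per-term estimate: the deviation of the difference quotient of $\la \psi_{n,x,p}\otimes\phi_m|\hat{A}|\psi_{n,x,p}\otimes\phi_m\ra$ from $\frac{i}{\hbar}\la[\hat{p},\hat{A}]\ra$ is bounded by $\epsilon$ times explicit multiples of $f_{n,m}=\|\hat{A}|\psi_{n,x,p}\otimes\phi_m\ra\|$ and $h_{n,m}=\|\hat{A}\hat{p}|\psi_{n,x,p}\otimes\phi_m\ra\|$, which is then summed against $\xi_n\lambda_m$ by Cauchy--Schwarz using the hypotheses on $\Tr\bigl((\hat{\rho}_{xp}\otimes\ap)\hat{A}^2\bigr)$ and $\Tr\bigl((\hat{\rho}_{xp}\otimes\ap)\hat{p}\hat{A}^2\hat{p}\bigr)$; uniform continuity of the derivatives comes from the same kind of explicit bound. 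You instead establish norm-Lipschitz continuity of $(x,p)\mapsto\hat{A}|\Psi_{nm}\ra$ by pulling the closed operator $\hat{A}$ through a Bochner integral, then run the product rule per term and differentiate the series termwise. Your route is more conceptual and isolates cleanly where the three $\hat{A}^2$-type hypotheses enter; the paper's route is more elementary and self-contained (nothing beyond Cauchy--Schwarz and the explicit constants of Lemmas~\ref{lm:Difflem},~\ref{cor:uniformCont}) and yields explicit moduli of continuity, uniform over the lemma's hypotheses, which is what the boundary-integral argument of Theorem~\ref{thm:mainresult} ultimately consumes.

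Two loose ends in your version should be tightened, though both are fixable with standard tools. First, Hille's theorem for closed operators requires strong measurability of $s\mapsto\hat{A}\bigl(\hat{p}-\tfrac{1}{2}p\bigr)|\Psi_{nm}(x+s,p)\ra$, which does not follow from closedness plus a norm bound alone; pair with $\chi$ in the domain of $\hat{A}$ to get continuity of $s\mapsto\la\hat{A}\chi|\bigl(\hat{p}-\tfrac{1}{2}p\bigr)\Psi_{nm}(x+s,p)\ra$ via Lemma~\ref{cor:uniformCont}, pass to arbitrary $\chi$ by density and the uniform bound, and invoke Pettis' theorem (the Hilbert space is separable). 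Second, your ``summable majorant independent of $\epsilon$'' is shakier than it looks: the only per-term uniform bound available is $\sup_{\mcl{C}_{l_{\rm{X}},l_{\rm{P}}}}\|\hat{A}\Psi_{nm}\|\le\bigl(C_1/(\xi_n\lambda_m)\bigr)^{1/2}$, and weighting by $\xi_n\lambda_m$ leaves $\sum_{n,m}(\xi_n\lambda_m)^{1/2}$, which can diverge. But your own tail estimate rescues the argument: Cauchy--Schwarz gives $\sum_{\rm{tail}}\xi_n\lambda_m\|\hat{p}\Psi_{nm}\|\,\|\hat{A}\Psi_{nm}\|\le\bigl(\sum_{\rm{tail}}\xi_n\lambda_m\|\hat{p}\Psi_{nm}\|^2\bigr)^{1/2}\bigl(\sum_{\rm{tail}}\xi_n\lambda_m\|\hat{A}\Psi_{nm}\|^2\bigr)^{1/2}$, and since $\hat{\rho}\in\mcl{P}$ bounds $\|\hat{p}\Psi_{nm}\|$ uniformly in $n$, $m$ and $(x,p)\in\mcl{C}_{l_{\rm{X}},l_{\rm{P}}}$, this tail tends to zero uniformly on the box. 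So the differentiated series converges uniformly; combined with pointwise convergence of the undifferentiated series and the $C^1$-ness of each term that you established in step (a), the classical termwise-differentiation theorem closes the argument. Replace the dominated-convergence phrasing by that, and your proof is complete.
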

\begin{proof}
We have
\ea{
\hat{\rho} &= \sum_{n=1}^{n_{\rm{s}}} \xi_n |\psi_n \ra\la \psi_n |
}
for some set of positive numbers $\xi_n$ and orthonormal vectors $|\psi_n\ra \in \mcl{P}_0$.  
 Define
\ea{
f_{n,m}(x,p) &= \left\| \hat{A} |\psi_{n,x,p} \otimes \phi_m\ra\right\|,
\\
g_{n,m}(x,p) &= \left| \hat{A} \hat{x} |\psi_{n,x,p} \otimes \phi_m \ra \right\|,
\\
h_{n,m}(x,p) &= \left\| \hat{A} \hat{p} |\psi_{n,x,p} \otimes \phi_m \ra \right\|,
}
\ea{
C_1 &= \sup_{x,p \in \mcl{C}_{l_{\rm{X}},l_{\rm{P}}}} \left(\Tr\bigl( (\hat{\rho}_{xp} \otimes \ap) \hat{A}^2 \bigr) \right),
\\
C_2 &= \sup_{x,p \in \mcl{C}_{l_{\rm{X}},l_{\rm{P}}}} \left( \Tr\bigl( (\hat{\rho}_{xp} \otimes \ap)\hat{x} \hat{A}^2 \hat{x} \bigr)  \right),
\\
C_3 &= \sup_{x,p \in \mcl{C}_{l_{\rm{X}},l_{\rm{P}}}} \left( \Tr\bigl( (\hat{\rho}_{xp} \otimes \ap)\hat{p} \hat{A}^2 \hat{p} \bigr) \right),
}
\ea{
C_4 &= \sup_n \bigl(N_{0,\frac{3}{2}}\bigl(|\psi_n\ra\bigr) \bigr), & \tilde{C}_4 &= \sup_n \bigl(\tilde{N}_{0,\frac{3}{2}}\bigl(|\psi_n\ra\bigr) \bigr),
\\
C_5 &= \sup_n \bigl(N_{0,\frac{3}{2}}\bigl(\hat{x}|\psi_n\ra\bigr) \bigr), & \tilde{C}_5 &= \sup_n \bigl(\tilde{N}_{0,\frac{3}{2}}\bigl(\hat{x} |\psi_n\ra\bigr) \bigr),
\\
C_6 &= \sup_n \bigl(N_{0,\frac{3}{2}}\bigl(\hat{p} |\psi_n\ra\bigr) \bigr), & \tilde{C}_6 &= \sup_n \bigl(\tilde{N}_{0,\frac{3}{2}}\bigl(\hat{p} |\psi_n\ra\bigr) \bigr).
\label{eq:C6Def}
}
It follows from Lemmas~\ref{lm:Difflem} and~\ref{cor:uniformCont} that
\ea{
\phantom{\le}&\left|\frac{1}{\epsilon} \bigl( \la \psi_{n,x+\epsilon,p} \otimes \phi_m | \hat{A} | \psi_{n,x+\epsilon,p} \otimes \phi_m \ra -
\la \psi_{n,x,p} \otimes \phi_m | \hat{A} | \psi_{n,x,p} \otimes \phi_m \ra \bigr) \right.
\nn
&\hspace{1.5 in} \left.  - \frac{i}{\hbar} \la \psi_{n,x,p} \otimes \phi_m | [\hat{p},\hat{A}] | \psi_{n,x,p}\otimes \phi_m \ra 
\right|
\nn
&  \le 2\epsilon\tilde{C}_4B(4+p^2)f_{n,m}(x,p) +\frac{\epsilon}{\hbar} B_1\tilde{C}_4 \left( h_{n,m}(x,p) + \frac{|p|}{2}  f_{n,m}(x,p)\right).
}
It follows from the Cauchy-Schwartz inequality that
\ea{
\sum_{n,m} \xi_n \lambda_m f_{n,m}(x,p) &\le \sqrt{C_1},  & \sum_{n,m} \xi_n \lambda_m h_{n,m}(x,p) & \le \sqrt{C_3}.
}
Consequently
\ea{ 
&\left|\frac{1}{\epsilon} \Bigl( \Tr\bigl((\hat{\rho}_{x+\epsilon,p} \otimes \ap) \hat{A} \bigr) - 
\Tr\bigl((\hat{\rho}_{x,p} \otimes \ap) \hat{A} \bigr)\Bigr) - \frac{i}{\hbar} \Tr\bigl( (\hat{\rho}_{x,p}\otimes \ap) [\hat{p},\hat{A}]\bigr)\right|
\nn
& \le 
2\epsilon\sqrt{C_1} \tilde{C}_4B(4+p^2) +\frac{\epsilon}{\hbar} B_1\tilde{C}_4 \left(\sqrt{C_3 }+ \frac{|p|}{2}  \sqrt{C_1}\right)
\nn 
& \to 0
}
as $\epsilon \to 0$, which establishes Eq.~\eqref{eq:traDiffA}.  Eq.~\eqref{eq:traDiffB} is proved similarly.

Lemmas~\ref{lm:Difflem} and~\ref{cor:uniformCont} also  imply
\ea{
&\Bigl| \la \psi_{n,x+\epsilon_1,p+\epsilon_2} \otimes \phi | [\hat{p},\hat{A} ] | \psi_{n,x+\epsilon_1,p+\epsilon_2}\ra - \la \psi_{n,x,p} \otimes \phi | [\hat{p},\hat{A} ] | \psi_{n,x,p}\ra \Bigr| 
\nn
&\hspace{0.5 in} \le 2h_{n,m}(x,p) B_1 \left( \epsilon_1 \tilde{C}_4 + \epsilon_2 C_4\right)
\nn
&\hspace{1 in} 
+2f_{n,m}(x,p) B_3\left( \epsilon_1(\tilde{C}_4+\tilde{C}_6)+ \epsilon_2(1+ C_4+C_6)\right).
}
Uniform continuity of the $x$ derivative now follows by another application of the Cauchy-Schwartz inequality.  Uniform continuity of the $p$ derivative is proved similarly.
\end{proof}
We are now ready to prove our main result.
\begin{Theorem}
\label{thm:mainresult}
Let $\mcl{R}$ be a subset of $\mcl{P}$ containing at least one state $\rho$ such that 
\ea{
&\hat{\rho}_{xp}, \;\; \frac{1}{\Tr(\hat{\rho}_{xp} \hat{x}^2)} \hat{x} \hat{\rho}_{xp} \hat{x}, \;\;
\frac{1}{\Tr(\hat{\rho}_{xp} \hat{p}^2)} \hat{p} \hat{\rho}_{xp} \hat{p} \;\;  \in \mcl{R}
\label{eq:mainResCond}
}
for all $(x,p) \in \mcl{C}_{l_{\rm{X}},l_{\rm{P}}}$.  Then
\ea{
\evc{x} \dvc{p} + \frac{\hbar}{l_{\rm{X}}} \evc{x} + \frac{\hbar}{l_{\rm{P}}} \dvc{p} \ge \frac{\hbar}{2}
}
for every measurement of position, and
\ea{
\evc{x} \evc{p} + \frac{\hbar}{l_{\rm{X}}} \evc{x} + \frac{\hbar}{l_{\rm{P}}} \evc{p} \ge \frac{\hbar}{2}
}
for every joint measurement of position and momentum.
\end{Theorem}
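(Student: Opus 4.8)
The plan is to deduce both inequalities from a single integral identity on the phase-space box $\mcl{C}_{l_{\rm{X}},l_{\rm{P}}}$, supplemented by two elementary bounds valid for any density matrix $\hat{\rho}$ and self-adjoint $\hat{A},\hat{B}$: the first-moment bound $|\Tr(\hat{\rho}\hat{A})|\le\sqrt{\Tr(\hat{\rho}\hat{A}^2)}$ and the commutator bound $|\Tr(\hat{\rho}(\hat{A}\hat{B}-\hat{B}\hat{A}))|\le 2\sqrt{\Tr(\hat{\rho}\hat{A}^2)}\,\sqrt{\Tr(\hat{\rho}\hat{B}^2)}$ (Cauchy--Schwarz). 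We may assume $\evc{x}$ and $\dvc{p}$ are both finite, since otherwise the inequality holds trivially under the convention $q\times\infty=\infty$. The algebraic backbone is that $\hat{\mu}_{\rm{f}}$ and $\hat{p}_{\rm{f}}$ commute, being $\hat{U}^{\dagger}$-conjugates of the commuting operators $\hat{\mu}_{\rm{i}}=I\otimes\hat{\mu}$ and $\hat{p}_{\rm{i}}=\hat{p}\otimes I$. Writing $\hat{\mu}_{\rm{f}}=\eop{X}+\hat{x}_{\rm{i}}$, $\hat{p}_{\rm{f}}=\dop{P}+\hat{p}_{\rm{i}}$ and inserting $[\hat{x}_{\rm{i}},\hat{p}_{\rm{i}}]=i\hbar$ gives
\[
 [\eop{X},\dop{P}]+[\eop{X},\hat{p}_{\rm{i}}]+[\hat{x}_{\rm{i}},\dop{P}]=-i\hbar .
\]
I would then fix a state $\hat{\rho}$ satisfying the hypothesis \eqref{eq:mainResCond}, abbreviate $\sigma_{xp}=\hat{\rho}_{xp}\otimes\ap$, and study the two real-valued functions $f(x,p)=\Tr(\sigma_{xp}\hat{\mu}_{\rm{f}})$ and $g(x,p)=\Tr(\sigma_{xp}\hat{p}_{\rm{f}})$ on the box.

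The first step is to differentiate. Applying Lemma~\ref{lm:expctDiff} with $\hat{A}=\hat{\mu}_{\rm{f}}$ and with $\hat{A}=\hat{p}_{\rm{f}}$, and simplifying the commutators with $[\hat{x}_{\rm{i}},\hat{p}_{\rm{i}}]=i\hbar$, gives
\[
 \frac{\partial f}{\partial x}=1+\frac{i}{\hbar}\Tr(\sigma_{xp}[\hat{p}_{\rm{i}},\eop{X}]),\qquad \frac{\partial g}{\partial p}=1-\frac{i}{\hbar}\Tr(\sigma_{xp}[\hat{x}_{\rm{i}},\dop{P}]).
\]
Taking the expectation of the displayed operator identity in $\sigma_{xp}$ and using these two relations to eliminate the two cross-commutators $[\eop{X},\hat{p}_{\rm{i}}]$ and $[\hat{x}_{\rm{i}},\dop{P}]$ yields the combined formula
\[
 \frac{\partial f}{\partial x}+\frac{\partial g}{\partial p}=1+\frac{i}{\hbar}\Tr(\sigma_{xp}[\eop{X},\dop{P}]).
\]

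Next I would integrate this over $\mcl{C}_{l_{\rm{X}},l_{\rm{P}}}$ and evaluate $\iint(\partial_x f+\partial_p g)\,dx\,dp$ a second way. Because Lemma~\ref{lm:expctDiff} makes $\partial f/\partial x$ and $\partial g/\partial p$ continuous, hence integrable on the box, Fubini's theorem and the fundamental theorem of calculus reduce this integral to the two boundary integrals $\int_{-l_{\rm{P}}/2}^{l_{\rm{P}}/2}[f(l_{\rm{X}}/2,p)-f(-l_{\rm{X}}/2,p)]\,dp$ and $\int_{-l_{\rm{X}}/2}^{l_{\rm{X}}/2}[g(x,l_{\rm{P}}/2)-g(x,-l_{\rm{P}}/2)]\,dx$. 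Since $\Tr(\sigma_{xp}\hat{x}_{\rm{i}})=\bar{x}^{\hat{\rho}}+x$ and $\Tr(\sigma_{xp}\hat{p}_{\rm{i}})=\bar{p}^{\hat{\rho}}+p$, each equals $l_{\rm{X}}l_{\rm{P}}$ up to a correction carried entirely by the error and disturbance means $\Tr(\sigma_{xp}\eop{X})$, $\Tr(\sigma_{xp}\dop{P})$; the first-moment bound caps these corrections by $2\evc{x}l_{\rm{P}}$ and $2\dvc{p}l_{\rm{X}}$, using that every displaced state $\hat{\rho}_{xp}$ lies in $\mcl{R}$. On the other hand the combined formula integrates the same quantity to $l_{\rm{X}}l_{\rm{P}}+\frac{i}{\hbar}\iint\Tr(\sigma_{xp}[\eop{X},\dop{P}])\,dx\,dp$, and the commutator bound caps the last term by $\frac{2}{\hbar}\evc{x}\dvc{p}\,l_{\rm{X}}l_{\rm{P}}$, again because $\hat{\rho}_{xp}\in\mcl{R}$. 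Equating the two evaluations and dividing by $2l_{\rm{X}}l_{\rm{P}}$ gives exactly $\evc{x}\dvc{p}+\frac{\hbar}{l_{\rm{X}}}\evc{x}+\frac{\hbar}{l_{\rm{P}}}\dvc{p}\ge\frac{\hbar}{2}$.

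The main obstacle is not this bookkeeping but the analytic justification it presupposes: that $f$ and $g$ really are differentiable on the box with the commutator formulas above and have continuous derivatives, so that Fubini and the fundamental theorem of calculus apply. This is precisely what Lemmas~\ref{lm:Difflem}--\ref{lm:expctDiff} supply, and it is where the seemingly technical conditions \eqref{eq:mainResCond} earn their place: requiring that $\Tr(\hat{\rho}_{xp}\hat{x}^2)^{-1}\hat{x}\hat{\rho}_{xp}\hat{x}$ and $\Tr(\hat{\rho}_{xp}\hat{p}^2)^{-1}\hat{p}\hat{\rho}_{xp}\hat{p}$ again lie in $\mcl{R}\subset\mcl{P}$ is exactly what makes $\Tr(\sigma_{xp}\hat{x}\hat{A}^2\hat{x})$ and $\Tr(\sigma_{xp}\hat{p}\hat{A}^2\hat{p})$ finite and bounded on the box for $\hat{A}=\hat{\mu}_{\rm{f}},\hat{p}_{\rm{f}}$, which are the hypotheses of Lemma~\ref{lm:expctDiff}. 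Finally, the error--error relation follows by running the identical argument for a joint measurement: one replaces $\dop{P}=\hat{p}_{\rm{f}}-\hat{p}_{\rm{i}}$ by the momentum error operator $\hat{\mu}_{\mathrm{P,f}}-\hat{p}_{\rm{i}}$ and uses that the two pointer observables $\hat{\mu}_{\mathrm{X,f}}$ and $\hat{\mu}_{\mathrm{P,f}}$ commute in place of the commuting pair $\hat{\mu}_{\rm{f}},\hat{p}_{\rm{f}}$; the operator identity and every subsequent estimate are unchanged, with $\dvc{p}$ replaced throughout by $\evc{p}$.
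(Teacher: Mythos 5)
Your proposal is correct and is essentially the paper's own proof: you use the same commutator identity $[\eop{X},\dop{P}]+[\eop{X},\hat{p}_{\rm{i}}]+[\hat{x}_{\rm{i}},\dop{P}]=-i\hbar$, the same appeal to Lemma~\ref{lm:expctDiff} (with condition~\eqref{eq:mainResCond} doing exactly the job you identify), the same Cauchy--Schwarz bounds on the commutator and first moments via $\hat{\rho}_{xp}\in\mcl{R}$, and your Fubini-plus-fundamental-theorem evaluation of $\iint(\partial_x f+\partial_p g)\,dx\,dp$ over the box is precisely the divergence-theorem step in the paper, merely written for $f=\Tr(\sigma_{xp}\hat{\mu}_{\rm{f}})$, $g=\Tr(\sigma_{xp}\hat{p}_{\rm{f}})$ instead of the error/disturbance means (which differ from these by the explicit terms $\bar{x}^{\hat{\rho}}+x$, $\bar{p}^{\hat{\rho}}+p$). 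The only cosmetic difference is that the paper inserts an intermediate ``supremum bounds the average'' step before applying the divergence theorem, whereas you integrate the identity directly and bound the commutator term inside the integral; the resulting estimates, including the treatment of the error--error case via the commuting pointer observables, are identical.
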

\begin{proof}
To prove the first relation observe that it is automatic if either of the quantities $\evc{x}$, $\dvc{p}$ is infinite.  Suppose, on the other hand, they are both finite.  It is easily seen that 
\ea{
[\eop{x},\dop{p}] &= -i\hbar - [\hat{x}_{\rm{i}},\dop{p}] +[\hat{p}_{\rm{i}},\eop{x}].
}
Let $\hat{\rho}$ be any state in $\mcl{R}$ satisfying  condition~\eqref{eq:mainResCond}.  It is easily seen that $\Tr\bigr(\hat{\rho}_{xp}\hat{x}^2\bigr)$, $\Tr\bigr(\hat{\rho}_{xp} \hat{p}^2\bigr)$ are bounded.  So we can apply Lemma~\ref{lm:expctDiff} with $\hat{A}=\eop{x}, \dop{p}$ to deduce
\ea{
\Tr\bigl((\hat{\rho}_{xp} \otimes \ap )[\eop{x},\dop{p}]\bigr) &= -i\hbar (1 + \boldsymbol{\nabla}\cdot \mbf{v})
}
where $\boldsymbol{\nabla} = \left(\begin{smallmatrix} \frac{\partial}{\partial x} \\ \frac{\partial}{\partial p} \end{smallmatrix}\right)$ and
\ea{
\mbf{v} &= \bmt \Tr\bigl((\hat{\rho}_{xp} \otimes \ap) \eop{x}\bigr) \\ \Tr\bigl((\hat{\rho}_{xp} \otimes \ap) \dop{p}\bigr)\emt .
}
Since $\boldsymbol{\nabla}\cdot \mbf{v}$ is continuous it is integrable.  Hence
\ea{
\evc{x} \dvc{p} & \ge \frac{\hbar}{2} \sup_{x,p\in \mcl{C}_{\lX,\lP}} \bigl(1+\boldsymbol{\nabla}\cdot \mbf{v}\bigr)
\nn
& \ge \frac{\hbar}{2} \left( 1-\frac{1}{\lX\lP} \left| \int_{\mcl{C}_{\lX,\lP}} \boldsymbol{\nabla}\cdot \mbf{v}\; dx dp \right| \right)
\nn
& =  \frac{\hbar}{2} \left( 1-\frac{1}{\lX\lP} \left| \int_{\mcl{B}_{\lX,\lP}}  \mbf{n} \cdot \mbf{v} \;  ds \right| \right)
\nn
&\ge \hbar\left( \frac{1}{2} - \frac{1}{\lX} \evc{x} - \frac{1}{\lP} \dvc{p}\right)
}
where $\mcl{B}_{\lX,\lP}$ is the boundary of $\mcl{C}_{\lX,\lP}$ and $\mbf{n}$ is the outward-pointing normal.
The second inequality is proved in the same way, starting from the commutation relation
\ea{
[\eop{x},\eop{p}] &= -i\hbar - [\hat{x}_{\rm{i}},\eop{p}] + [\hat{p}_{\rm{i}},\eop{x}].
}
\end{proof}
Inequalities \eqref{eq:MyErrDisIdeal}, \eqref{eq:MyErrErrIdeal} are proved by specializing to the case $\mcl{R} = \mcl{P}$ and taking the limit as $\lX$, $\lP\to \infty$.

\end{document}